\documentclass[11pt]{article}
\usepackage[a4paper]{geometry}
\usepackage[utf8]{inputenc}
\usepackage[T1]{fontenc}
\usepackage{lmodern}

\usepackage{tikz-cd}
\usepackage{amsmath, amsthm, amssymb, amsfonts, mathtools}
\usepackage{braket}
\usepackage{xcolor}
\usepackage{hyperref}
\hypersetup{
    colorlinks=true,
    linkcolor=blue,
    citecolor=blue,
    urlcolor=blue
}
\usepackage[style=alphabetic, sorting=nty]{biblatex} 
\usepackage{todonotes} 
\usepackage{graphicx}
\usepackage{caption}   
\usepackage{subcaption}  
\usepackage{authblk}
\usepackage[ruled,vlined,linesnumbered]{algorithm2e}

\SetAlgoInsideSkip{smallskip}
\SetAlgoNlRelativeSize{-1}
\DontPrintSemicolon
\usepackage{combelow}

\newtheorem{theorem}{Theorem}[section]
\newtheorem{lemma}[theorem]{Lemma}
\newtheorem{proposition}[theorem]{Proposition}
\newtheorem{corollary}[theorem]{Corollary}

\newtheorem{definition}[theorem]{Definition}

\newcommand{\HomI}[2]{\operatorname{Hom}_1(#1,#2)}
\newcommand{\im}{\operatorname{im}}

\def\london{Quantinuum, Partnership House, Carlisle Place, London, SW1P 1BX, UK}
\def\broomfield{Quantinuum, Broomfield, CO 80021, USA}

\title{Automated logical Clifford gadgets for heterogeneous architectures via chain maps}

\author[1]{Asmae Benhemou}
\affil[1]{\london}
\author[2]{Noah Berthusen}
\affil[2]{\broomfield}
\date{\today}

\begin{document}

\maketitle

\begin{abstract}
    Transversal CNOTs are ubiquitous for entangling logical qubits of identical CSS codes pairwise. For distinct codes, the options are much more limited, and are typically known only for structurally related code families. We introduce an automated framework for synthesising inter-code logical CNOT circuits between arbitrary CSS codes using chain maps. Given a prescribed bipartite logical CNOT network between these codes, our method constructs the affine space of chain maps realising the desired logical action, and then searches this space for shallow and sparse physical circuit candidates. We benchmark this method on a range of heterogeneous CSS code pairs, recovering known transversal constructions, and finding new low-depth solutions, including distance-preserving and partially distance-preserving examples, which we demonstrate can be promoted to the full code distance using additional flag measurements. We discuss applications to code switching, magic-state injection, Pauli product measurements, and operations on concatenated codes, where bespoke chain maps offer favourable spacetime tradeoffs for logical interfaces tailored to heterogeneous architectures. Finally, we show how our framework straightforwardly extends to targeted logical CZ gates.
\end{abstract}

\tableofcontents

\section{Introduction} 
\label{sec:introduction}

The earliest proposals for large-scale fault-tolerant quantum computing relied on homogeneous architectures, in which a single quantum error correcting code (QECC) is used to perform all required logical operations~\cite{Knill_2005, Litinski_2019}. Now, protocols~\cite{anderson2014, Heu_en_2025} and computational schemes~\cite{Sullivan_2024, gidney2024magic, tan2025, yoder2025, webster2026, cain2026} are considering heterogeneous architectures leveraging several distinct QECCs that are tailored for a specific purpose, whether that be memory, Clifford computation, or magic state preparation. Realising the benefits of such architectures requires the ability to efficiently interconnect the many potentially disparate QECC components. 

The most natural inter-code logical operation is the transversal CNOT gate: when both code blocks are CSS codes, a physical CNOT gate between matched physical qubits provides a logical CNOT between matched logical qubits. However, for heterogeneous architectures in which logical CNOT gates between distinct CSS codes are necessary, finding such operations requires a more structured approach. A key insight, developed in recent work on homomorphic CNOTs~\cite{huang2023homomorphic, xu2025fast}, is that such inter-code couplings are naturally described by chain maps between the chain complexes representing the two codes. A degree-1 chain map $\gamma_1: C_1^B \to C_1^A$ specifies a pattern of physical CNOT gates from code $A$ to code $B$ that, by virtue of the chain map conditions, automatically preserves the product codespace and induces a well-defined linear action on the logical operators of the code.

Despite this conceptual clarity, existing constructions of inter-code gadgets have largely been developed only for pairings that are closely related structurally, namely between 2D and 3D codes, and codes and their punctured subcodes~\cite{anderson2014, Heu_en_2025, li2025, tan2025}. Other proposals also typically rely on symmetry-preserving operations between identical code blocks~\cite{grassl2013leveraging}. A systematic and automated approach for finding low-overhead gadgets for arbitrary pairs of CSS codes and prescribed logical actions has been absent. 

In this work, we present a linear-algebraic framework for constructing and searching the space $\mathrm{Hom}_1(B, A)$ of all degree-1 chain maps between two CSS codes, and characterise how targeting a specific logical CSS action restricts this space to an affine subspace. We then introduce an optimisation-based procedure which identifies low-depth, sparse maps within that subspace. The resulting algorithm is fully automated; that is, given parity-check matrices for two CSS codes and a desired linear map describing a bipartite logical CNOT network between the two code blocks, it constructs the chain-map hom-space, imposes the appropriate subspace restrictions, and searches for physically efficient gadgets using a CP-SAT solver with a sparse basis heuristic.

Numerically, we find that the affine freedom left after logical targeting is large enough to reduce circuit overhead for the code sizes considered here. Across the examples studied, the optimiser finds either distance-preserving circuits or partially fault-tolerant circuits whose distance can be restored using additional flagging while preserving favourable spacetime overheads. We simulated examples of such circuits against idling-memory baselines and verified that the resulting gadgets have per-logical-channel performance on the same order of magnitude as the corresponding joint idle-memory baseline and the worse transversal-CNOT benchmark of the two codes, sometimes within a small constant factor.

We demonstrate the utility of this approach across several practically relevant settings. As the main application, we construct bespoke chain maps to facilitate code switching between a range of CSS code pairs, recovering known transversal constructions, and identifying new low-depth solutions, including both fault-tolerant and partially fault-tolerant circuits that offer favourable spacetime tradeoffs for near-term architectures. Notably, we identify maps which can implement magic state injection in instances that would have otherwise required expensive universal adapters~\cite{swaroop2026}. More broadly, bespoke chain maps enable a range of logical gadgets beyond code switching; for instance, the efficient realisation of logical FANOUT/IN gates allows the construction of useful gadgets for concatenated codes and large entangled states. The examples we present in this work are by no means exhaustive, and we argue that the presented tool provides a flexible framework for designing logical operations tailored to both heterogeneous and homogeneous fault-tolerant architectures. 

The remainder of this work is organised as follows. In Sec.~\ref{sec:preliminaries}, we introduce the chain complex formalism for CSS codes and define chain maps between chain complexes. In Sec.~\ref{sec:build-hom-space} we describe how to construct the degree-1 hom-space $\mathrm{Hom}_1(B, A)$ explicitly as the kernel of a binary matrix, and how imposing a logical action restricts this space to an affine subspace. In Sec.~\ref{sec:optimisation}, we cast the search for low-overhead representatives as a discrete optimisation problem over this affine family. Sec.~\ref{sec:applications} demonstrates the framework on several applications including magic state injection, code switching, Pauli product measurements, and operations on concatenated codes, comparing against existing constructions where available. Finally, we discuss some future directions and open problems in Sec.~\ref{sec:discussion}.

\section{Homomorphic CNOTs from chain maps} 
\label{sec:preliminaries} 

\subsection{CSS codes as chain complexes}
\label{ssec:codes_bg}

An $[[n,k,d]]$ CSS code can be represented by a length-$3$ chain complex over $\mathbb{F}_2$
\begin{equation}
C_2 \xrightarrow{\;\partial_2\;} C_1 \xrightarrow{\;\partial_1\;} C_0,
\end{equation}
where $C_0 \cong \mathbb{F}_2^{r_X}$, $C_1 \cong \mathbb{F}_2^{n}$, and $C_2 \cong \mathbb{F}_2^{r_Z}$ are finite-dimensional vector spaces with fixed computational bases, respectively identified with the $X$-check space, data-qubit space, and $Z$-check space. The boundary maps satisfy $\partial_1\partial_2=0$, and encode the $Z$ and $X$ stabilizer generators, so that $\partial_2 = H_Z{}^{\mathsf T}$ and $\partial_1 = H_X$ define the parity-check matrices of the code. The CSS orthogonality condition $H_XH_Z^{\mathsf T} = 0$ follows. The $Z$ logical operators are identified with homology classes in 
\begin{equation}
    H_1 = \ker \partial_1 / \mathrm{im}\,\partial_2.
    \label{eq:homology}
\end{equation}
Let us henceforth choose binary logical operator representatives with $ L_X L_Z^{\mathsf T}=I_k$, such that $\bar X_i$ anticommutes with $\bar Z_i$ and commutes with
$\bar Z_j$ for $i\neq j$.

\subsection{Chain maps between CSS codes}
\label{ssec:chain-maps-css}

Consider two CSS codes $A$ and $B$ as presented in Sec.~\ref{ssec:codes_bg}, with vector spaces 
\[C_{0}^{A(B)} \cong \mathbb{F}_2^{r_X^{A(B)}}, \qquad C_{1}^{A(B)} \cong \mathbb{F}_2^{n_{A(B)}}, \qquad C_2^{A(B)} \cong \mathbb{F}_2^{r_Z^{A(B)}},\]
represented by chain complexes $A_{\bullet}$ and $B_{\bullet}$ respectively, and boundary maps $\partial_2^{A(B)}$ and $\partial_1^{A(B)}$. The homology and cohomology follow accordingly. A \emph{chain map} $\boldsymbol{\gamma} : B_\bullet \xrightarrow{} A_\bullet$ is a triple of linear maps that define a morphism of chain complexes, i.e., $\boldsymbol{\gamma}  = (\gamma_2, \gamma_1, \gamma_0)$ such that
\begin{equation}
\gamma_2 : C_2^B \to C_2^A, \qquad
\gamma_1 : C_1^B \to C_1^A, \qquad
\gamma_0 : C_0^B \to C_0^A,
\end{equation}
that satisfy the following relations
\begin{equation}
\partial_2^A \gamma_2 = \gamma_1 \partial_2^B,
\qquad
\partial_1^A \gamma_1 = \gamma_0 \partial_1^B.
\label{eq:chain-map-conditions}
\end{equation}
These relations give rise to the following commutative diagram.
\begin{equation}
\begin{tikzcd}
  C_2^B \arrow[r, "\partial_2^B"] \arrow[d, "\gamma_2"'] 
    & C_1^B \arrow[r, "\partial_1^B"] \arrow[d, "\gamma_1"'] 
    & C_0^B \arrow[d, "\gamma_0"] \\
  C_2^A \arrow[r, "\partial_2^A"'] 
    & C_1^A \arrow[r, "\partial_1^A"'] 
    & C_0^A
\end{tikzcd}
\end{equation}

Expressing the boundary operators explicitly in terms of the parity-check matrices, the conditions~\eqref{eq:chain-map-conditions} give rise to the linear equations
\begin{equation}
(H_Z^A){}^{\mathsf T} \gamma_2 = \gamma_1 (H_Z^B){}^{\mathsf T},
\qquad
H_X^A \gamma_1 = \gamma_0 H_X^B,
\end{equation}
where $\gamma_0 \in \mathbb{F}_2^{r_X^A \times r_X^B}$, $\gamma_1 \in \mathbb{F}_2^{n_A \times n_B}$ and $\gamma_2 \in \mathbb{F}_2^{r_Z^A \times r_Z^B}$ act on stabilizer and qubit vector spaces. These equations define a commutative diagram that ensures that $X(Z)$-stabilizers are mapped to $X(Z)$-stabilizers, and that $\gamma_1$ represents a well-defined map on the homology, cf. Lemma~\ref{lemma:induced_op}. 
\begin{lemma}
    Any chain map $\boldsymbol{\gamma}:B_\bullet \xrightarrow{} A_\bullet$ induces a well-defined linear map
    \begin{equation}
        \gamma_\star : H_1(B) \xrightarrow{} H_1(A), \qquad [v] \xrightarrow{} [\gamma_1 v].
    \end{equation}
    \label{lemma:induced_op}
\end{lemma}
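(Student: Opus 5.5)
The argument is a direct check from the two chain-map relations in \eqref{eq:chain-map-conditions}; each relation supplies one of the two ingredients needed for the map to descend to homology. Recall $H_1(B)=\ker\partial_1^B/\im\partial_2^B$. I will show three things in turn: $\gamma_1$ sends $\ker\partial_1^B$ into $\ker\partial_1^A$; it sends $\im\partial_2^B$ into $\im\partial_2^A$; and the induced map on quotients is therefore independent of the chosen representative and linear.

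\emph{Cycles go to cycles.} Take $v\in\ker\partial_1^B$. By the second relation in \eqref{eq:chain-map-conditions},
\begin{equation*}
\partial_1^A\gamma_1 v=\gamma_0\partial_1^B v=\gamma_0 0=0,
\end{equation*}
so $\gamma_1 v\in\ker\partial_1^A$ and $[\gamma_1 v]$ is a class in $H_1(A)$. In code language, $\gamma_1 v$ commutes with all $X$-checks of $A$.

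\emph{Boundaries go to boundaries.} Take $b=\partial_2^B w$ with $w\in C_2^B$. By the first relation,
\begin{equation*}
\gamma_1 b=\gamma_1\partial_2^B w=\partial_2^A\gamma_2 w\in\im\partial_2^A.
\end{equation*}
In code language, $Z$-stabilizers of $B$ are sent to $Z$-stabilizers of $A$.

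\emph{Well-definedness and linearity.} Suppose $[v]=[v']$ in $H_1(B)$. Then $v-v'=\partial_2^B w$ for some $w$. By linearity of $\gamma_1$ and the previous step,
\begin{equation*}
\gamma_1 v-\gamma_1 v'=\partial_2^A\gamma_2 w\in\im\partial_2^A,
\end{equation*}
so $[\gamma_1 v]=[\gamma_1 v']$. Hence $\gamma_\star[v]:=[\gamma_1 v]$ does not depend on the representative. Linearity of $\gamma_\star$ follows from linearity of $\gamma_1$ and of the quotient map $\ker\partial_1^A\to H_1(A)$. Equivalently, $\gamma_1$ restricts to a linear map $\ker\partial_1^B\to\ker\partial_1^A$; composing with the quotient gives a map that vanishes on $\im\partial_2^B$, and the universal property of the quotient then yields $\gamma_\star$.

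None of these steps is a real obstacle. The only point needing care is the well-definedness step, which is exactly where the degree-2 condition $\partial_2^A\gamma_2=\gamma_1\partial_2^B$ is used. The cycle condition alone would give a map $\ker\partial_1^B\to H_1(A)$, but without the degree-2 condition it need not factor through $H_1(B)$.
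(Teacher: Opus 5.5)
Your proof is correct and follows the paper's argument essentially step for step. The relation $\partial_1^A\gamma_1=\gamma_0\partial_1^B$ sends cycles to cycles, and the relation $\partial_2^A\gamma_2=\gamma_1\partial_2^B$ shows that representatives differing by a boundary have images differing by a boundary; your explicit justification of linearity via the quotient's universal property is a small detail the paper leaves implicit.
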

\begin{proof}
    Let $v \in \ker \partial_1^B$, then since $\boldsymbol{\gamma}$ is a chain map, it follows that $\partial_1^A(\gamma_1 v) = \gamma_0\partial_1^B v = 0$, so $\gamma_1 v \in \ker \partial_1^A$ (i.e. cycles are mapped to cycles). Next, if $v$ and $v'$ represent the same homology class in $H_1(B)$, then $v' = v + \partial_2^B w$ for some $w \in C_2^{B}$. Then, by \eqref{eq:chain-map-conditions} it follows that $\gamma_1v' = \gamma_1v + \partial_2^A\gamma_2w$. Hence, $\gamma_1v'$ and $\gamma_1v$ differ by an element of $\im\partial_2^A$, and therefore belong to the same equivalence class $[\gamma_1v] = [\gamma_1v']$ in $H_1(A)$ (i.e. boundaries are mapped to boundaries).  
\end{proof}
The set of all chain maps $\boldsymbol{\gamma} : B_\bullet \xrightarrow{} A_\bullet$ is given by the space $\mathrm{Hom}_{\mathrm{Ch}}(B,A)$, which is a linear subspace of the vector space
\[
\mathbb{F}_2^{r_Z^A \times r_Z^B} \oplus \mathbb{F}_2^{n_A \times n_B}  \oplus \mathbb{F}_2^{r_X^A \times r_X^B}.
\]
In particular, we are interested in $\operatorname{Hom}_1(B, A)$, that is 
the set of middle maps defined as the projection of the full solution space onto the linear subspace  
\begin{equation}
    \mathrm{Hom}_1(B, A) := \{ \gamma_1 \in \mathbb{F}_2^{n_A \times n_B} \mid \exists \gamma_2, \gamma_0 \text{ s.t } (\gamma_2, \gamma_1, \gamma_0) \in \mathrm{Hom}_{\mathrm{Ch}}(B, A)\}.
    \label{eq:hom_1_space}
\end{equation}
This space consists precisely of the degree-1 maps $\gamma_1$ that admit a completion to the full chain map. The following proposition gives an explicit characterisation of this condition purely in terms of $\gamma_1$. 
\begin{proposition}
A linear map $\gamma_1 \in \mathbb{F}_2^{n_A\times n_B}$ lies in $\operatorname{Hom}_1(B,A)$ if and only if
\begin{align}
\gamma_1\!\left(\im\partial_2^B\right)
&\subseteq \im\partial_2^A,
\label{eq:hom1-cond1}\\
\gamma_1\!\left(\ker\partial_1^B\right)
&\subseteq \ker\partial_1^A.
\label{eq:hom1-cond2}
\end{align}
\label{prop:hom1}
\end{proposition}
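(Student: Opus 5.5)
The plan is to prove the two directions separately. Necessity comes almost for free from the proof of Lemma~\ref{lemma:induced_op}. Sufficiency is a linear-algebra lifting problem over $\mathbb{F}_2$: from \eqref{eq:hom1-cond1} and \eqref{eq:hom1-cond2} we must build $\gamma_2$ and $\gamma_0$ satisfying \eqref{eq:chain-map-conditions}.

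\emph{Necessity.} Suppose $(\gamma_2,\gamma_1,\gamma_0)$ is a chain map. For $\partial_2^B w\in\im\partial_2^B$ we have $\gamma_1\partial_2^B w=\partial_2^A\gamma_2 w\in\im\partial_2^A$, which gives \eqref{eq:hom1-cond1}. For $v\in\ker\partial_1^B$ we have $\partial_1^A\gamma_1 v=\gamma_0\partial_1^B v=0$, which gives \eqref{eq:hom1-cond2}. This is exactly the argument already used in Lemma~\ref{lemma:induced_op}.

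\emph{Sufficiency, construction of $\gamma_2$.} Fix the standard basis $e_1,\dots,e_{r_Z^B}$ of $C_2^B$. By \eqref{eq:hom1-cond1}, each $\gamma_1\partial_2^B e_j$ lies in $\im\partial_2^A$, so we can choose a preimage $w_j\in C_2^A$ with $\partial_2^A w_j=\gamma_1\partial_2^B e_j$. Define $\gamma_2 e_j:=w_j$ and extend linearly. Then $\partial_2^A\gamma_2=\gamma_1\partial_2^B$ holds on a basis, hence everywhere. No well-definedness issue arises, because we are defining $\gamma_2$ freely on a basis of the domain. Any non-uniqueness of the $w_j$ (elements of $\ker\partial_2^A$) is harmless.

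\emph{Sufficiency, construction of $\gamma_0$.} This is the step that needs care. We want $\gamma_0$ with $\gamma_0\partial_1^B=\partial_1^A\gamma_1$. First define $\gamma_0$ on the subspace $\im\partial_1^B\subseteq C_0^B$ by $\gamma_0(\partial_1^B v):=\partial_1^A\gamma_1 v$. It is well defined: if $\partial_1^B v=\partial_1^B v'$, then $v-v'\in\ker\partial_1^B$, so by \eqref{eq:hom1-cond2} $\partial_1^A\gamma_1(v-v')=0$. This map is linear on $\im\partial_1^B$.

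Next extend it to all of $C_0^B$. Choose a complement $W$ with $C_0^B=\im\partial_1^B\oplus W$, which exists over any field, and set $\gamma_0|_W=0$ (or any other linear map). Concretely, in matrices one can row-reduce $H_X^B$ and solve $\gamma_0 H_X^B=H_X^A\gamma_1$. Consistency of this system is exactly the statement that the row space condition holds: every $v$ with $H_X^B v=0$ satisfies $H_X^A\gamma_1 v=0$, which is \eqref{eq:hom1-cond2}.

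The resulting triple satisfies both relations in \eqref{eq:chain-map-conditions}, so $\gamma_1\in\operatorname{Hom}_1(B,A)$. The only subtle point in the whole proof is the well-definedness of $\gamma_0$. That is precisely where \eqref{eq:hom1-cond2} is used, and the fact that $H_X^B$ need not have full row rank (redundant checks) is handled by the extension through the complement $W$.
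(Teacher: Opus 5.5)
Your proof is correct and follows the paper's argument essentially step for step: necessity by direct computation, $\gamma_2$ obtained by lifting $\gamma_1\partial_2^B$ through $\partial_2^A$ using \eqref{eq:hom1-cond1}, and $\gamma_0$ defined on $\im\partial_1^B$ by $\gamma_0(\partial_1^B x) := \partial_1^A\gamma_1 x$ (well defined by \eqref{eq:hom1-cond2}) and then extended linearly to $C_0^B$. Your basis-wise choice of preimages for $\gamma_2$ and your matrix-solvability remark for $\gamma_0 H_X^B = H_X^A\gamma_1$ only make explicit details that the paper leaves implicit.
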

\begin{proof}
    Suppose that $\gamma_1 \in \HomI{B}{A}$. Then, by definition, there exist linear maps $\gamma_2: C_2^B\to C_2^A$ and $\gamma_0:C_0^B\to C_0^A$ that satisfy the chain map conditions in Eqs.~\eqref{eq:chain-map-conditions}. Let $u \in \im \partial_2^B$ and $u = \partial_2^B w$ for some $w \in C_2^B$. Then, 
    \[
    \gamma_1 u = \gamma_1\partial_2^B w = \partial_2^A\gamma_2 w \in \im\partial_2^A.
    \]
    Thus, $\gamma_1(\im\partial_2^B) \subseteq \im\partial_2^A$. 
    If $v\in \ker\partial_1^B$, then 
    \[
    \partial_1^A(\gamma_1 v) = \gamma_0\partial_1^Bv = 0,
    \]
    so that $\gamma_1 v \in \ker\partial_1^A$, and therefore $\gamma_1(\ker\partial_1^B) \subseteq \ker\partial_1^A$.
    Conversely, assume that conditions~\eqref{eq:hom1-cond1} and~\eqref{eq:hom1-cond2} hold. The first inclusion implies that $\im(\gamma_1\partial_2^B) \subseteq \im\partial_2^A$, so there exists a map $\gamma_2:C_2^B\to C_2^A$ such that \[ \partial_2^A\gamma_2=\gamma_1\partial_2^B. \] For the second condition, let
    \[
    \gamma_0(\partial_1^B x):=\partial_1^A\gamma_1x,
    \]
    for some $x \in C_1^B$. If $\partial_1^B x=\partial_1^B x'$ for some $x' \in C_1^B$, then
    $x+x'\in\ker\partial_1^B$, and hence
    $\gamma_1(x+x')\in\ker\partial_1^A$ by~\eqref{eq:hom1-cond2}. Therefore
    \[
    \partial_1^A\gamma_1x=\partial_1^A\gamma_1x',
    \]
    so this defines a linear map $\gamma_0 : \im\partial_1^B\to C_0^A$. Choosing any linear extension of $\gamma_0$ from $\im\partial_1^B$ to $C_0^B$ yields
    \[
    \gamma_0\partial_1^B=\partial_1^A\gamma_1.
    \]
    Thus, $\boldsymbol{\gamma} = (\gamma_2, \gamma_1, \gamma_0)$ satisfies the chain map conditions, and therefore $\gamma_1 \in \HomI{B}{A}$.
\end{proof}

\section{Linear-algebraic synthesis of chain maps}
\label{sec:build-hom-space}

In this section, we provide an explicit construction of the space of chain maps between two CSS chain complexes. This reduces the problem of enumerating chain maps to the computation of the kernel of an explicit binary matrix. We then derive a direct kernel description of the projected space $\operatorname{Hom}_1(B,A)$, and explain how additional requirements give rise to linear or affine subspaces of this hom-space. 

\subsection{Vectorisation of the chain map constraints}
\label{ssec:vectorisation}

We begin by rewriting the chain-map equations as a homogeneous linear system over $\mathbb{F}_2$. To do so, let $A$ and $B$ denote two CSS codes as presented in Sec.~\ref{ssec:codes_bg}, and fix bases for the spaces $C_i^A$ and $C_i^B$. We represent the unknown maps $\gamma_2$, $\gamma_1$, and $\gamma_0$ of the chain map $\boldsymbol{\gamma} : B_\bullet \xrightarrow{} A_\bullet$ by their matrix entries in these bases. We use the standard column-stacking vectorisation convention, whereby for a matrix $X \in \mathbb{F}_2^{m \times n}$, we denote $\operatorname{vec}(X)\in \mathbb{F}_2^{mn}$ the vector obtained by stacking the columns of $X$ from left to right. With this convention, we apply the identity 
\begin{equation}
\operatorname{vec}(S X Q)=(Q^{\mathsf T}\otimes S)\operatorname{vec}(X),
\label{eq:vec-identity}
\end{equation}
valid for matrices $S$, $X$, and $Q$ of compatible dimensions, to Eqs.~\eqref{eq:chain-map-conditions}. This yields the identities
\begin{align}
(I_{r_Z^B}\otimes (H_Z^A)^{\!\top})\,\operatorname{vec}(\gamma_2)
+
(H_Z^B\otimes I_{n_A})\,\operatorname{vec}(\gamma_1)
&=0,
\\
(I_{n_B}\otimes H_X^A)\,\operatorname{vec}(\gamma_1)
+
((H_X^B)^{\!\top}\otimes I_{r_X^A})\,\operatorname{vec}(\gamma_0)
&=0.
\end{align}
Stacking the equations above results in a homogeneous linear system $Mv = 0$ over $\mathbb{F}_2$, where 
\begin{equation}
M :=
\begin{pmatrix}
I_{r_Z^B}\otimes (H_Z^A)^{\!\top} & H_Z^B\otimes I_{n_A} & 0 \\
0 & I_{n_B}\otimes H_X^A & (H_X^B)^{\!\top}\otimes I_{r_X^A}
\end{pmatrix},
\qquad
v :=
\begin{pmatrix}
\operatorname{vec}(\gamma_2)\\
\operatorname{vec}(\gamma_1)\\
\operatorname{vec}(\gamma_0)
\end{pmatrix}.
\label{eq:M-block-form}
\end{equation}
Let us define the vectorisation map 
\begin{equation}
\mathcal{V}:
(\gamma_2,\gamma_1,\gamma_0)
\longmapsto
\begin{pmatrix}
\operatorname{vec}(\gamma_2)\\
\operatorname{vec}(\gamma_1)\\
\operatorname{vec}(\gamma_0)
\end{pmatrix}.
\label{eq:vec-map}
\end{equation}
\begin{theorem}
    Let $M$ be the matrix defined in~\eqref{eq:M-block-form}. Then, the map 
    \begin{equation}
        \mathcal{V}: \operatorname{Hom}_{\mathrm{Ch}}(B,A)\xrightarrow{\;\cong\;} \ker M 
    \end{equation}
    is a linear isomorphism. 
    \label{thm:M-kernel-isomorphism}
\end{theorem}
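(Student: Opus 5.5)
The plan is to view $\mathcal{V}$ first on the whole ambient space and only then restrict it to chain maps. On
\[
\mathbb{F}_2^{r_Z^A \times r_Z^B} \oplus \mathbb{F}_2^{n_A \times n_B} \oplus \mathbb{F}_2^{r_X^A \times r_X^B},
\]
$\mathcal{V}$ is a linear bijection onto $\mathbb{F}_2^{r_Z^Ar_Z^B + n_An_B + r_X^Ar_X^B}$. This holds because column-stacking $\operatorname{vec}$ is linear and simply reindexes matrix entries, so it is a bijection in each component. Restricting an injective linear map to a subspace keeps it injective and linear. It therefore suffices to prove the set equality
\[
\mathcal{V}(\operatorname{Hom}_{\mathrm{Ch}}(B,A)) = \ker M.
\]

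For this equality, take an arbitrary triple $(\gamma_2,\gamma_1,\gamma_0)$ in the ambient space and apply the identity~\eqref{eq:vec-identity}.

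The first chain map condition, $(H_Z^A)^{\mathsf T}\gamma_2 + \gamma_1 (H_Z^B)^{\mathsf T} = 0$ (signs are irrelevant over $\mathbb{F}_2$), vectorises as follows:
\begin{itemize}
\item for the term $(H_Z^A)^{\mathsf T}\gamma_2 I_{r_Z^B}$, take $S=(H_Z^A)^{\mathsf T}$ and $Q=I$, which gives $(I_{r_Z^B}\otimes (H_Z^A)^{\mathsf T})\operatorname{vec}(\gamma_2)$;
\item for the term $I_{n_A}\gamma_1 (H_Z^B)^{\mathsf T}$, take $S=I$ and $Q=(H_Z^B)^{\mathsf T}$, which gives $(H_Z^B\otimes I_{n_A})\operatorname{vec}(\gamma_1)$.
\end{itemize}
The second condition is handled the same way and gives the second block row of $M$. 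Since $\operatorname{vec}$ is injective, each matrix equation holds exactly when its vectorised form holds. So the triple satisfies~\eqref{eq:chain-map-conditions} if and only if $M\mathcal{V}(\gamma_2,\gamma_1,\gamma_0)=0$.

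Both inclusions follow from this equivalence. A chain map lands in $\ker M$. Conversely, any $v\in\ker M$ is $\mathcal{V}$ of some ambient triple, by surjectivity of $\mathcal{V}$ on the ambient space, and that triple is then a chain map.

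The only step that needs care is the dimension bookkeeping in~\eqref{eq:vec-identity}: checking that the identity blocks have the sizes stated in $M$, i.e.\ that $Q^{\mathsf T}$ lands on the left of the Kronecker product with the correct identity size. This is a routine check and I expect no real obstacle.
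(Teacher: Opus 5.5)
Your proposal is correct and takes essentially the same route as the paper. You get injectivity because vectorisation only reindexes entries, you show the chain-map equations are equivalent to $M\mathcal{V}(\boldsymbol{\gamma})=0$ via~\eqref{eq:vec-identity}, and you obtain the reverse inclusion by reshaping an arbitrary $v\in\ker M$; your Kronecker-block bookkeeping also matches the blocks of $M$ in~\eqref{eq:M-block-form}.
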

\begin{proof}
    The map $\mathcal{V}$ is linear and injective since given a $\boldsymbol{\gamma} =  (\gamma_2,\gamma_1,\gamma_0)$, if $\mathcal{V}(\boldsymbol{\gamma}) = 0$, then all three matrices have zero vectorisation and are therefore zero. If $\boldsymbol{\gamma} \in \operatorname{Hom}_{\mathrm{Ch}}(B,A)$, then the chain map equations hold, and by construction these are equivalent to the vector equation $M\mathcal{V}(\boldsymbol{\gamma})=0$, i.e. $\mathcal{V}(\boldsymbol{\gamma})\in \ker M$. The converse is straightforward by reshaping any vector $v \in \ker M$. Therefore $\mathcal{V}$ is bijective onto $\ker M$, and hence is a linear isomorphism.   
\end{proof}

While Theorem~\ref{thm:M-kernel-isomorphism} gives a complete description of $\operatorname{Hom}_{\mathrm{Ch}}(B,A)$, in many applications the middle component $\gamma_1$ is of direct interest. We therefore consider the projected space $\operatorname{Hom}_1(B,A)$ defined in Eq.~\eqref{eq:hom_1_space}. By Proposition~\ref{prop:hom1}, a matrix $\gamma_1 \in \mathbb{F}_2^{n_A\times n_B}$ lies in $\operatorname{Hom}_1(B,A)$ if and only if it satisfies the two subspace-inclusion conditions in Eqs.~\eqref{eq:hom1-cond1} and~\eqref{eq:hom1-cond2}. These can be expressed as a homogeneous linear system. Indeed, let $N_B$ be a matrix whose columns form a basis of $\ker\partial_1^B$, and let $U_A$ be a matrix whose columns form a basis of $\ker ((\partial_2^A{})^{\mathsf T})$. Then, these conditions are equivalent to 
\begin{align}
\partial_1^A\gamma_1N_B &= 0,\\
U_A^{\!\top}\gamma_1\partial_2^B &= 0.
\label{eq:hom1-conditions-vectorised}
\end{align}
By the vectorisation identity~\eqref{eq:vec-identity}, these give rise to the linear system $P\operatorname{vec}(\gamma_1)=0$ where 
\begin{equation}
    P = \begin{pmatrix}
        N_B^{\mathsf T} \otimes \partial_1^A \\
        (\partial_2^B)^{\mathsf T} \otimes U_A^{\mathsf T} 
    \end{pmatrix}. 
    \label{eq:hom1-linear-system}
\end{equation}
We obtain the immediate corollary. 
\begin{corollary}
    Under fixed bases and vectorisation convention, the restricted vectorisation map 
    \begin{equation}
        \operatorname{vec}\big |_{\operatorname{Hom}_1(B,A)}: \operatorname{Hom}_1(B,A) \xrightarrow{\;\cong\;} \ker P         
    \end{equation}
    is a linear isomorphism. 
\end{corollary}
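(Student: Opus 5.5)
The plan is to reduce the corollary to Proposition~\ref{prop:hom1} together with the vectorisation identity~\eqref{eq:vec-identity}. The map $\operatorname{vec}$ is a linear isomorphism $\mathbb{F}_2^{n_A\times n_B}\to\mathbb{F}_2^{n_An_B}$. Hence its restriction to any subspace is injective and linear. It therefore suffices to show that $\operatorname{vec}(\operatorname{Hom}_1(B,A)) = \ker P$, i.e.\ that for every $\gamma_1$ we have $\gamma_1\in\operatorname{Hom}_1(B,A)$ if and only if $P\operatorname{vec}(\gamma_1)=0$. Surjectivity onto $\ker P$ then follows by reshaping any $v\in\ker P$ into a matrix, exactly as in the proof of Theorem~\ref{thm:M-kernel-isomorphism}.

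The key step is to show that~\eqref{eq:hom1-cond1} and~\eqref{eq:hom1-cond2} are equivalent to the two matrix equations~\eqref{eq:hom1-conditions-vectorised}.

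\emph{Second condition.} The columns of $N_B$ span $\ker\partial_1^B$. So $\gamma_1(\ker\partial_1^B)\subseteq\ker\partial_1^A$ holds iff $\partial_1^A\gamma_1 N_B e_j=0$ for every basis column $e_j$, i.e.\ iff $\partial_1^A\gamma_1N_B=0$. Linearity of $\gamma_1$ handles arbitrary combinations of the columns.

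\emph{First condition.} I will use the orthogonal-complement fact over $\mathbb{F}_2$: for any matrix $D$, $\im D=(\ker D^{\mathsf T})^{\perp}$ with respect to the standard bilinear form. This holds by a dimension count, $\dim(\ker D^{\mathsf T})^\perp = \operatorname{rank} D$, together with the obvious inclusion $\im D\subseteq(\ker D^{\mathsf T})^\perp$. The dimension formula is valid for a nondegenerate form even though vectors may be self-orthogonal. Applying this with $D=\partial_2^A$, a vector $y$ lies in $\im\partial_2^A$ iff $U_A^{\mathsf T}y=0$. Now $\im\partial_2^B$ is spanned by the columns of $\partial_2^B$. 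So~\eqref{eq:hom1-cond1} holds iff $U_A^{\mathsf T}\gamma_1\partial_2^B=0$.

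Finally, apply~\eqref{eq:vec-identity} to each equation. With $S=\partial_1^A$ and $Q=N_B$ we get $(N_B^{\mathsf T}\otimes\partial_1^A)\operatorname{vec}(\gamma_1)=0$. With $S=U_A^{\mathsf T}$ and $Q=\partial_2^B$ we get $((\partial_2^B)^{\mathsf T}\otimes U_A^{\mathsf T})\operatorname{vec}(\gamma_1)=0$. Stacking the two gives $P\operatorname{vec}(\gamma_1)=0$. Combined with Proposition~\ref{prop:hom1}, this yields the claimed equality of subspaces.

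The only step that is not pure bookkeeping is the identification $\im\partial_2^A=\ker U_A^{\mathsf T}$. I expect this to be the main point to justify carefully, via the rank argument over $\mathbb{F}_2$; everything else is direct.
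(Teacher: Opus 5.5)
Your proof is correct and follows the paper's route. The paper likewise takes Proposition~\ref{prop:hom1}, rewrites its two inclusions as $\partial_1^A\gamma_1N_B=0$ and $U_A^{\mathsf T}\gamma_1\partial_2^B=0$, and applies the vectorisation identity~\eqref{eq:vec-identity}. Your one addition is the explicit justification of $\im\partial_2^A=\ker U_A^{\mathsf T}$ via the annihilator dimension count over $\mathbb{F}_2$, a step the paper uses without comment.
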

If $\{\delta_i\}$ is a basis of $\ker P$ and $\Delta_i = \operatorname{vec}^{-1}(\delta_i)$, then $\{\Delta_i\}_{i=1}^q$ is a basis for $\HomI{B}{A}$. Hence, every $\gamma_1 \in \HomI{B}{A}$ admits a unique expansion 
\begin{equation}
    \gamma_1 = \sum_{i=1}^q x_i\Delta_i, \qquad x_i \in \mathbb{F}_2.
    \label{eq:hom1-basis-expansion}
\end{equation}
This provides a linear parametrisation of all chain-map feasible degree-1 maps between CSS codes $A$ and $B$, independent of any prescribed logical action or physical optimisation. This construction is summarised in Algorithm~\ref{alg:hom1-family}.    

\subsection{Affine restrictions of the degree-$1$ hom-space}
\label{ssec:affine-restrictions}

The basis expansion in Eq.~\eqref{eq:hom1-basis-expansion} provides a coordinate description of the linear space $\HomI{B}{A}$ of degree-1 maps that extend to full chain maps. We now consider additional constraints on its elements, which provide a framework for restricting $\HomI{B}{A}$ to linear or affine subspaces within this parametrisation. 

Let us fix a basis $\{\Delta_i\}_{i=1}^q$ of $\HomI{B}{A}$, such that every $\gamma_1$ in this space can be written as in Eq.~\eqref{eq:hom1-basis-expansion}. Let us then define $\Phi:\HomI{B}{A} \to \mathbb{F}_2^t$ as a linear map encoding a family of constraints, and $b\in\mathbb{F}_2^t$ a target vector. We define the corresponding feasible set by 
\begin{equation}
    \mathcal{A}_b := \{\gamma_1\in \HomI{B}{A} \mid \Phi(\gamma_1)=b\}.
    \label{eq:Ab-def}
\end{equation}
Since $\Phi$ is linear and $\HomI{B}{A}$ a linear space, $\mathcal{A}_b$ is either empty or an affine subspace of the degree-$1$ hom-space. If $\mathcal{A}_b\neq\varnothing$, and $\gamma_1^{(0)} \in \mathcal{A}_b$, it follows that
\begin{equation}
    \mathcal{A}_b = \gamma_1^{(0)} + \ker\Phi.
    \label{eq:affine-subspace}
\end{equation}
In particular, $\mathcal{A}_0=\ker \Phi$ is a linear subspace. In the coordinates of Eq.~\eqref{eq:hom1-basis-expansion}, the constraint $\Phi(\gamma_1)=b$ becomes a linear system
\begin{equation}
    \Lambda x=b,
    \qquad
    \Lambda \in \mathbb{F}_2^{t\times q}.
    \label{eq:Lx=b}
\end{equation}
where the $i^{\mathrm{th}}$ column of $\Lambda$ is $\Phi(\Delta_i)\in\mathbb{F}_2^t$. Hence, imposing linear constraints on admissible maps $\gamma_1$ reduces to solving a linear system over $\mathbb{F}_2$ in the coefficient vector $\boldsymbol{x}=\left(x_1,\dots,x_q\right)^{\mathsf T} \in \mathbb{F}_2^q$. If Eq.~\eqref{eq:Lx=b} admits a solution, let $x^{(0)}$ be a particular solution and let $\{\eta_j\}_{j=1}^m$ be a basis of $\ker \Lambda$. Writing 
\begin{equation}
    \widetilde{\Delta}_j := \sum_{i=1}^q (\eta_j)_i\,\Delta_i,
\end{equation}
every solution is uniquely of the form 
\begin{equation}
    \gamma_1 = \gamma_1^{(0)} + \sum_{j=1}^m y_j \widetilde{\Delta}_j, \qquad y_j\in\mathbb{F}_2, 
    \label{eq:affine-family}
\end{equation}
where 
\begin{equation}
    \gamma_1^{(0)} := \sum_{i=1}^q x_i^{(0)} \Delta_i. 
\end{equation}
In the following, we use this affine parametrisation as the starting point for logical targeting and optimisation. 

\section{Induced logical action}
\label{sec:logical-action}

The formalism developed in the previous section describes the space of admissible degree-$1$ components of a chain map between the chain complexes of two CSS codes. Let us now interpret this map as a physical interaction pattern, and describe its induced action on the logical operators.  

\subsection{Physical coupling}
\label{ssec:physical-coupling}

Let $A$ and $B$ be CSS codes with $n_A$ and $n_B$ physical qubits, represented as chain complexes as in Sec.~\ref{ssec:chain-maps-css}, with $\gamma_1: C_1^B \to C_1^A$. Let us fix bases for $C_1^A \cong \mathbb{F}_2^{n_A}$ and $C_1^B \cong \mathbb{F}_2^{n_B}$, such that the map $\gamma_1$ is represented by a binary matrix, denoted $\gamma_1\in\mathbb{F}_2^{n_A\times n_B}$.  We now interpret this matrix as specifying a pattern of CNOT gates from code $A$ to code $B$, as considered in Refs.~\cite{huang2023homomorphic, xu2025fast, li2025}. For each pair $(i,j)$ where $(\gamma_1)_{ij} = 1$, we include a CNOT gate with control on qubit $i$ of code $A$ and target on qubit $j$ of code $B$. Since all such CNOT gates act with controls supported on block $A$ and targets on block $B$, any pair of them commute. Together they define the physical unitary operator
\begin{equation}
    U_{\gamma_1} := \prod_{\substack{1\le i\le n_A\\1\le j\le n_B\\(\gamma_1)_{ij}=1}}
    \mathrm{CNOT}_{a_i\to b_j}.
    \label{eq:physical-cnot}
\end{equation}
where $a_i$ and $b_j$ indicate qubits of codes $A$ and $B$ respectively. For $x\in\mathbb{F}_2^n$ and $z\in\mathbb{F}_2^n$, let us write 
\begin{equation}
    X(x) := \bigotimes_{i=1}^n X_i^{x_i},
    \qquad 
    Z(z) := \bigotimes_{i=1}^n Z_i^{z_i}.
\end{equation}
Then, conjugation by $U_{\gamma_1}$ acts on physical Pauli operators according to 
\begin{align}
    U_{\gamma_1}X_A(x_a)U_{\gamma_1}^{\dagger} &= X_A(x_a)X_B(\gamma_1^{\mathsf T}x_a), 
    \label{eq:phys-X}\\
    U_{\gamma_1}Z_B(z_b)U_{\gamma_1}^{\dagger} &= Z_A(\gamma_1z_b)Z_B(z_b) 
    \label{eq:phys-Z}
\end{align}
for all $x_a \in \mathbb F_2^{n_A}$ and $z_b \in \mathbb F_2^{n_B}$, where $X_A$ and $Z_B$ respectively denote Pauli operators with support on codes $A$ and $B$ and $\gamma_1^{\mathsf T}: C_1^A\to C_1^B$. Hence, $X$-type information propagates from code $A$ to code $B$ through $\gamma_1^{\mathsf T}$, while $Z$-type information propagates from code $B$ to code $A$ through $\gamma_1$. Moreover, the chain-map conditions imply that $Z$-stabilizers of code $B$ are mapped to $Z$-stabilizers of code $A$, and that $X$-stabilizers of code $A$ are mapped to $X$-stabilizers of code $B$. Equivalently, conjugation by $U_{\gamma_1}$ preserves the product codespace $\mathcal{C}_A\otimes \mathcal{C}_B$. 

\subsection{Induced logical \texorpdfstring{$Z$}{Z}- and \texorpdfstring{$X$}{X}-actions}

We now describe the logical action induced by the physical operator $U_{\gamma_1}$ in Eq.~\eqref{eq:physical-cnot}. A degree-$1$ chain map component $\gamma_1: C_1^B\to C_1^A$ induces a map on homology $\gamma_\star$ by Lemma~\ref{lemma:induced_op}. The induced map $\gamma_\star$ may be represented in chosen logical
bases by a matrix $\Gamma_Z(\gamma_1)$ defined as follows. Since $H_1$ identifies the $Z$-type logical operators of a CSS code, this map determines the propagated $Z$-type logical action of the physical coupling described in Sec.~\ref{ssec:physical-coupling}. Indeed, let $\boldsymbol L_Z^A\in\mathbb F_2^{k_A\times n_A}$ and
$\boldsymbol L_Z^B\in\mathbb F_2^{k_B\times n_B}$ be matrices whose rows are chosen representatives of bases of $H_1(A)$ and $H_1(B)$, respectively. The matrix $\gamma_1(\boldsymbol L_Z^B)^{\mathsf T}\in \mathbb F_2^{n_A\times k_B}$ has columns in $\ker \partial_1^A$, and hence defines $k_B$ homology classes in $H_1(A)$. Hence, there exists a unique matrix $\Gamma_Z(\gamma_1)\in \mathbb F_2^{k_A\times k_B}$ such that 
\begin{equation}
    \gamma_1(\boldsymbol L_Z^B)^{\mathsf T} = \partial_2^A\beta_Z + (\boldsymbol L_Z^A)^{\mathsf T}\Gamma_Z(\gamma_1),
    \label{eq:GammaZ-def}
\end{equation}
for some coefficient matrix $\beta_Z \in \mathbb F_2^{r_Z^A\times k_B}$. Dually, the propagated logical $X$-type action is obtained from the transpose map $\gamma_1^{\mathsf T}$ which acts on logical $X$ operators. With $\boldsymbol L_X^A\in\mathbb F_2^{k_A\times n_A}$ and $\boldsymbol L_X^B\in\mathbb F_2^{k_B\times n_B}$ matrices whose rows are representatives of $X$-type logical operators of codes $A$ and $B$, there exists a unique matrix $\Gamma_X(\gamma_1) \in \mathbb F_2^{k_B\times k_A}$ such that 
\begin{equation}
    \gamma_1^{\mathsf T}(\boldsymbol L_X^A)^{\mathsf T} = (\partial_1^B)^{\mathsf T} \beta_X + (\boldsymbol L_X^B)^{\mathsf T} \Gamma_X(\gamma_1),
    \label{eq:GammaX-def}
\end{equation}
for some coefficient matrix $\beta_X \in \mathbb F_2^{r_X^B \times k_A}$. Thus, for fixed logical bases, the coupling map $\gamma_1$ determines two linear maps 
\begin{equation}
    \Gamma_Z: \mathbb F_2^{k_B} \to \mathbb F_2^{k_A}, \qquad \Gamma_X: \mathbb F_2^{k_A} \to \mathbb F_2^{k_B},
\end{equation}
respectively describing the induced logical $Z$- and $X$-propagation. 
\begin{proposition}
Let $\gamma_1\in \HomI{B}{A}$, and let
\[
\Gamma_Z(\gamma_1)\in \mathbb F_2^{k_A\times k_B},
\qquad
\Gamma_X(\gamma_1)\in \mathbb F_2^{k_B\times k_A}
\]
be the induced logical $Z$- and $X$-action matrices in the chosen logical bases. Define
\[
\Omega_A:=\boldsymbol L_X^A(\boldsymbol L_Z^A)^{\mathsf T}\in\mathbb F_2^{k_A\times k_A},
\qquad
\Omega_B:=\boldsymbol L_X^B(\boldsymbol L_Z^B)^{\mathsf T}\in\mathbb F_2^{k_B\times k_B}.
\]
Then
\begin{equation}
\Omega_A\Gamma_Z(\gamma_1)
=
\Gamma_X(\gamma_1)^{\mathsf T}\Omega_B.
\label{eq:Gamma-compat}
\end{equation}
In particular, if the logical bases are symplectically normalised so that $\Omega_A=I_{k_A}$ and $\Omega_B=I_{k_B}$, then Eq.~\eqref{eq:Gamma-compat} reduces to 
\begin{equation}
    \Gamma_Z(\gamma_1) = \Gamma_X(\gamma_1)^{\mathsf T}. 
    \label{eq:Gamma-compat-symplectic}
\end{equation}
\label{prop:Gamma-compat}
\end{proposition}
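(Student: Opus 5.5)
The idea is to compute the single matrix $\boldsymbol L_X^A\,\gamma_1(\boldsymbol L_Z^B)^{\mathsf T}\in\mathbb F_2^{k_A\times k_B}$ in two ways. The first uses the defining relation~\eqref{eq:GammaZ-def} for $\Gamma_Z$. The second uses the transpose of the defining relation~\eqref{eq:GammaX-def} for $\Gamma_X$. Equating the two expressions will give Eq.~\eqref{eq:Gamma-compat} directly.

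\emph{First evaluation.} Left-multiply Eq.~\eqref{eq:GammaZ-def} by $\boldsymbol L_X^A$ to get
\[
\boldsymbol L_X^A\gamma_1(\boldsymbol L_Z^B)^{\mathsf T}=\boldsymbol L_X^A\partial_2^A\beta_Z+\Omega_A\Gamma_Z(\gamma_1).
\]
The first term vanishes because the rows of $\boldsymbol L_X^A$ are $X$-logicals. They commute with all $Z$-stabilizers, so $\boldsymbol L_X^A (H_Z^A)^{\mathsf T}=\boldsymbol L_X^A\partial_2^A=0$. Equivalently, the rows of $\boldsymbol L_X^A$ lie in $\ker(\partial_2^A)^{\mathsf T}$.

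\emph{Second evaluation.} Transpose Eq.~\eqref{eq:GammaX-def} to get
\[
\boldsymbol L_X^A\gamma_1=\beta_X^{\mathsf T}\partial_1^B+\Gamma_X(\gamma_1)^{\mathsf T}\boldsymbol L_X^B.
\]
Right-multiply by $(\boldsymbol L_Z^B)^{\mathsf T}$. The term $\beta_X^{\mathsf T}\partial_1^B(\boldsymbol L_Z^B)^{\mathsf T}$ vanishes because the columns of $(\boldsymbol L_Z^B)^{\mathsf T}$ are cycles, i.e.\ lie in $\ker\partial_1^B$. What remains is $\Gamma_X(\gamma_1)^{\mathsf T}\boldsymbol L_X^B(\boldsymbol L_Z^B)^{\mathsf T}=\Gamma_X(\gamma_1)^{\mathsf T}\Omega_B$.

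Equating the two evaluations gives Eq.~\eqref{eq:Gamma-compat}. Substituting $\Omega_A=I_{k_A}$ and $\Omega_B=I_{k_B}$ then gives Eq.~\eqref{eq:Gamma-compat-symplectic}.

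The only step needing care is making sure both decompositions exist, so that $\Gamma_X$ is well defined alongside $\Gamma_Z$. For $\Gamma_Z$ this follows from Lemma~\ref{lemma:induced_op}, together with Eq.~\eqref{eq:hom1-cond2} of Proposition~\ref{prop:hom1}. For $\Gamma_X$ one needs the dual statement: $\gamma_1^{\mathsf T}$ maps $\ker(\partial_2^A)^{\mathsf T}$ into $\ker(\partial_2^B)^{\mathsf T}$, which follows from $\partial_2^A\gamma_2=\gamma_1\partial_2^B$ by transposing. It also maps $\im(\partial_1^A)^{\mathsf T}$ into $\im(\partial_1^B)^{\mathsf T}$, which follows from $\partial_1^A\gamma_1=\gamma_0\partial_1^B$ by transposing, since $\gamma_1^{\mathsf T}(\partial_1^A)^{\mathsf T}=(\partial_1^B)^{\mathsf T}\gamma_0^{\mathsf T}$. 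The paper asserts existence when defining $\Gamma_X$, so I will cite that.

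The computation itself uses only orthogonality and never uniqueness of $\beta_X,\beta_Z$. It therefore holds for any admissible choice of the coefficient matrices.
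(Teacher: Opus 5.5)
Your proof is correct and is essentially the paper's argument: the paper's observation that the commutation pairing between the propagated images of $\bar X_i^A$ and $\bar Z_j^B$ is trivial amounts to evaluating $\boldsymbol L_X^A\gamma_1(\boldsymbol L_Z^B)^{\mathsf T}$ once through the $\Gamma_Z$ decomposition and once through the $\Gamma_X$ decomposition. You carry this out in explicit matrix form, which makes visible the two orthogonality facts ($\boldsymbol L_X^A\partial_2^A=0$ and $\partial_1^B(\boldsymbol L_Z^B)^{\mathsf T}=0$) that the paper leaves implicit in ``computing this propagated pairing in the chosen logical bases.''
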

\begin{proof}
For every $i\in\{1,\dots,k_A\}$ and $j\in\{1,\dots,k_B\}$, conjugation by the Clifford operator $U_{\gamma_1}$ preserves the Pauli commutation pairing between the $i^{\mathrm{th}}$ logical $X$ operator of $A$ and the $j^{\mathrm{th}}$
logical $Z$ operator of $B$. Since these operators initially act on different code blocks, their pairing is trivial, and hence so is the pairing of their propagated images. Computing this propagated pairing in the chosen logical bases gives
\[
e_i^{\mathsf T}\Omega_A\Gamma_Z(\gamma_1)e_j
=
e_i^{\mathsf T}\Gamma_X(\gamma_1)^{\mathsf T}\Omega_B e_j.
\]
Since this holds for all $i,j$, Eq.~\eqref{eq:Gamma-compat} follows. The
symplectic-normalised case is immediate. 
\end{proof}
Note that since a change of logical basis of codes $A$ and $B$ acts on $\Gamma_Z(\gamma_1)$ and $\Gamma_X(\gamma_1)$ by invertible left-right multiplication, properties of the induced logical action such as rank, injectivity, surjectivity, and bijectivity are basis-independent. This motivates the following classification of admissible couplings $\gamma_1\in\HomI{B}{A}$.
\begin{definition}
    The logical rank of a coupling $\gamma_1\in\HomI{B}{A}$ is
    \begin{equation}
        r(\gamma_1):=\dim(\operatorname{im}\gamma_\star)
        =\operatorname{rank}\Gamma_Z(\gamma_1),
    \end{equation}
    where $\gamma_\star:H_1(B)\to H_1(A)$ denotes the induced map on homology.
\end{definition}
The rank $r(\gamma_1)$ counts the number of independent logical $Z$ degrees of freedom of code $B$ that propagate non-trivially to logical $Z$ degrees of freedom of code $A$. In particular, $r(\gamma_1) = 0$ corresponds to a logically trivial coupling, whereas $r(\gamma_1) = \min(k_A, k_B)$ is the maximal possible logical rank. 

\begin{definition}
Let $\gamma_1\in\HomI{B}{A}$ and $\gamma_\star: H_1(B) \to H_1(A)$ be the induced map on homology. We say that $\gamma_1$ is logically
\begin{enumerate}
    \item injective if $\gamma_\star$ is injective, i.e. $\ker\Gamma_Z(\gamma_1)=\{0\}$,
    \item surjective if $\gamma_\star$ is surjective, i.e. $\operatorname{im}\Gamma_Z(\gamma_1)=\mathbb F_2^{k_A}$,
    \item bijective if $\gamma_\star$ is bijective, i.e. $k_A=k_B$ and $\Gamma_Z(\gamma_1)$ is invertible.
\end{enumerate}
\end{definition}
Injective couplings realise logical embeddings of the $Z$-sector of code $B$ into that of code $A$, while surjective couplings realise logical projections onto the $Z$-sector of code $A$. By Proposition~\ref{prop:Gamma-compat}, the induced $X$-sector action is then constrained by commutation preservation. In symplectically normalised bases, Proposition~\ref{prop:Gamma-compat} identifies the compatible logical $X$ action via Eq.~\eqref{eq:Gamma-compat-symplectic}. 

Up to independent changes of logical basis on code $A$ and $B$, any matrix $\Gamma_Z(\gamma_1)\in\mathbb F_2^{k_A\times k_B}$ of rank $r$ can be brought to the canonical form 
\begin{equation}
S\,\Gamma_Z(\gamma_1)\,T =
    \begin{pmatrix}
    I_r & 0\\
    0 & 0
    \end{pmatrix},
    \qquad
    S\in GL(k_A,\mathbb F_2),\quad T\in GL(k_B,\mathbb F_2).
\label{eq:GammaZ-normal-form}
\end{equation}
Hence, up to independent changes of logical basis on $A$ and $B$, the induced logical $Z$-action is equivalent to $r$ independent logical CNOT couplings, together with spectator logical operators. This motivates the following definition.
\begin{definition}
A coupling $\gamma_1\in\HomI{B}{A}$ is said to realise a rank-$r$ homomorphic \emph{CNOT} if 
\begin{equation}
    \operatorname{rank}\Gamma_Z(\gamma_1)=r.
    \label{eq:rank-r-hCNOT}
\end{equation}
It is said to realise a full-rank homomorphic \emph{CNOT} if \[
\operatorname{rank}\Gamma_Z(\gamma_1)=\min(k_A,k_B).\]
\end{definition}
A full-rank homomorphic CNOT need not be bijective unless $k_A=k_B$. When $k_A=k_B$ and $\Gamma_Z(\gamma_1)$ is invertible, the chain map induces a one-to-one pairing between the logical-$Z$ sectors of the two codes and is equivalent, up to logical basis changes, to $k_A$ independent logical CNOTs.

\section{Targeting logical action as affine constraints}
\label{sec:logical-targeting}

\subsection{General affine targeting}
\label{ssec:general-scheme}

We now specialise the affine-restriction framework of Sec.~\ref{ssec:affine-restrictions} to the case in which constraints are imposed on the induced logical action of a degree-$1$ map. After fixing logical representatives for codes $A$ and $B$, each admissible coupling $\gamma_1 \in \HomI{B}{A}$ determines induced logical matrices $\Gamma_Z\in\mathbb F_2^{k_A\times k_B}$ and $\Gamma_X\in\mathbb F_2^{k_B\times k_A}$, as defined in Eqs.~\eqref{eq:GammaZ-def} and~\eqref{eq:GammaX-def}. By construction, both depend linearly on $\gamma_1$. Let us combine them into the logical action map 
\begin{equation}
\Gamma:
\HomI{B}{A}\to
\mathbb F_2^{k_A\times k_B}\oplus \mathbb F_2^{k_B\times k_A},
\qquad
\Gamma(\gamma_1):=\bigl(\Gamma_Z(\gamma_1),\Gamma_X(\gamma_1)\bigr).
\label{eq:logical-action-map}
\end{equation}
Let 
\begin{equation}
    \mathcal{T}:\mathbb F_2^{k_A\times k_B}\oplus \mathbb F_2^{k_B\times k_A}\to \mathbb F_2^t
\end{equation}
be a linear map, and let $b \in \mathbb F_2^t$. We define the corresponding logically targeted family by 
\begin{equation}
\HomI{B}{A}\big|_{(\mathcal T,b)}
:= \left\{
\gamma_1\in \HomI{B}{A} \;\middle|\;\mathcal T\!\bigl(\Gamma_Z(\gamma_1),\Gamma_X(\gamma_1)\bigr)=b\right\}.
\label{eq:logical-target-family}
\end{equation}
Since both $\Gamma$ and $\mathcal T$ are linear, their composition $\mathcal T \circ \Gamma$ is also linear. Hence, $\HomI{B}{A}\big|_{(\mathcal T,b)}$ is either empty or an affine subspace of $\HomI{B}{A}$. Thus, targeting a logical action is a particular instance of the affine-restriction formalism introduced in Sec.~\ref{ssec:affine-restrictions}. A particularly interesting case is obtained by requiring that
\begin{equation}
    \Gamma_Z(\gamma_1)=\Gamma_Z^{\mathrm{tgt}},
    \qquad
    \Gamma_X(\gamma_1)=\Gamma_X^{\mathrm{tgt}},
\end{equation}
for desired targets $\Gamma_Z^{\mathrm{tgt}}\in \mathbb F_2^{k_A\times k_B}$, and $\Gamma_X^{\mathrm{tgt}}\in \mathbb F_2^{k_B\times k_A}$. This is a logically targeted family of the form~\eqref{eq:logical-target-family}. With normalised symplectic logical bases, Proposition~\ref{prop:Gamma-compat} implies that any valid target must satisfy
\begin{equation}
    \Gamma_X^{\mathrm{tgt}}=(\Gamma_Z^{\mathrm{tgt}})^{\mathsf T}.
    \label{eq:clifford-tgt}
\end{equation}
Hence, in that setting, it is sufficient to prescribe the induced logical $Z$-action alone. In our implementation, we take the target map $\mathcal T$ to be the vectorisation of $\Gamma_Z$, with
$b=\operatorname{vec}(\Gamma_Z^{\mathrm{tgt}})$, and the compatible
$X$-action is fixed by Eq.~\eqref{eq:clifford-tgt}. Algorithm~\ref{alg:restrict-by-gammaxz} summarises this restriction of an
affine family to a prescribed logical action.

A useful special case is obtained by specifying which logical degrees of freedom are allowed to participate in the induced action. Let
\[
S_A\subseteq \{1,\dots,k_A\},
\qquad
S_B\subseteq \{1,\dots,k_B\}
\]
be subsets of the logical indices of codes $A$ and $B$, respectively. A \emph{subset mapping} from $S_B$ to $S_A$ is enforced by the linear support constraints
\begin{equation}
    (\Gamma_Z(\gamma_1))_{ij}=0,
    \qquad
    \forall (i,j)\notin S_A\times S_B.
    \label{eq:subset-support}
\end{equation}
Additional linear conditions may then be imposed on the entries of $\Gamma_Z(\gamma_1)$ within the allowed block $S_A\times S_B$. Hence, subset mapping is a genuine affine restriction of the logical action. 

\subsection{Structured nonlinear targets}
\label{ssec:nonlinear-targeting}

Not every logically meaningful requirement is affine. In practice, such conditions may be handled either by fixing logical bases and replacing the nonlinear requirement by an equivalent prescribed target matrix whenever a normal form is available, or by imposing the nonlinear condition after constructing an affine family of admissible candidates. For example, 
\begin{equation}
\operatorname{rank}\Gamma_Z(\gamma_1)=r,
\qquad
\operatorname{rank}\hspace{1mm}\!\bigl(\Gamma_Z(\gamma_1)|_{S_A\times S_B}\bigr)=1
\end{equation}
are nonlinear in $\gamma_1$, and therefore do not, in general, define affine subspaces of $\HomI{B}{A}$. For rank-$r$ targets, the normal form in Eq.~\eqref{eq:GammaZ-normal-form} shows that any rank-$r$ logical $Z$-action is equivalent, under independent logical basis changes, to the canonical matrix
\begin{equation}
    \begin{pmatrix}
        I_r & 0\\
        0 & 0
    \end{pmatrix}.
\end{equation}
Accordingly, under freedom of choice of logical bases, one may realise rank-$r$ homomorphic CNOTs by fixing such a representative and solving the corresponding affine targeting problem. With fixed logical bases, this selects a single affine slice of the generally nonlinear rank-$r$ condition.

A second example is obtained by requiring that the induced logical action on a selected logical block have rank $1$. This motivates the following notion of a logical FANOUT.

\begin{definition}
Let $S_A\subseteq \{1,\dots,k_A\}$ and
$S_B\subseteq \{1,\dots,k_B\}$. A map
$\gamma_1 \in \HomI{B}{A}$ is said to realise a logical
\emph{\textbf{FANOUT}} if
\begin{equation}
\operatorname{supp}\Gamma_Z(\gamma_1)\subseteq S_A\times S_B,
\qquad
\operatorname{rank}(
\Gamma_Z(\gamma_1)|_{S_A\times S_B}
)=1.
\label{eq:FANOUT-def}
\end{equation}
\end{definition}
More generally, this logical targeting formalism extends to collections of CSS code blocks, where both $\gamma_1$ and the induced logical matrices acquire natural block decompositions, and affine targeting may be imposed on individual block pairs or on collections of block interactions. This is described in Appendix~\ref{assec:multi-block-targeting}.

\section{Optimisation}
\label{sec:optimisation}

\subsection{CP-SAT formulation}
\label{sec:cp-sat}

The formalism in Secs.~\ref{ssec:affine-restrictions}~and~\ref{ssec:general-scheme} reduces the construction of a logically targeted inter-code coupling to the selection of an element of an affine family of admissible degree-$1$ maps. Concretely, let
\begin{equation}
    \mathcal A = \gamma_1^{(0)} + \operatorname{span}\{\widetilde{\Delta}_1, \dots, \widetilde{\Delta}_m\} \subseteq \HomI{B}{A},
\end{equation}
denote such a non-empty affine family. Then, every element $\gamma_1(y) \in \mathcal A$ is of the form in Eq.~\eqref{eq:affine-family}, where $y=(y_1, \dots, y_m)^{\mathsf T} \in \mathbb F_2^m$. Thus, the algebraic validity and logical targeting constraints are baked into the affine search space, and it remains to select, among its elements, a representative that is practically favourable. Indeed, given a cost function $f : \mathbb F_2^{n_A\times n_B}\to \mathbb R$, which quantifies the quality of a coupling map $\gamma_1$, we cast the synthesis of such a map as a combinatorial optimisation problem, i.e. $\min_{y\in \mathbb F_2^m} f(\gamma_1(y))$, over the admissible affine solution set. 

In practice, a $\gamma_1$ is interpreted as a physical CNOT circuit defined in Eq.~\eqref{eq:physical-cnot}. We therefore seek a solution whose associated circuit is \emph{shallow}, and preferably \emph{sparse}. While these conditions do not guarantee code distance-preservation, they may serve as a useful proxy by virtue of, respectively, minimising the time overhead, and number of entangling gates in the operation, which helps in limiting the spread of errors in the circuit, and thereby reduce the cost of fault-tolerant flagging. In our implementation, these costs are derived from the bipartite interaction graph encoded by $\gamma_1$, and optionally support constraints which may reflect geometry or connectivity by using a \emph{mask} $\mathcal B \in \mathbb F_2^{n_A\times n_B}$ specifying allowed couplings, with $\mathcal B_{ij}$ excluding edge $(a_i, b_j)$. 

Let $\gamma_1 \in \mathbb F_2^{n_A\times n_B}$ be a feasible map, specifying a set of physical CNOT gates from code blocks $A$ to $B$, with an edge $(i,j)$ whenever $(\gamma_1)_{ij} = 1$. We therefore associate to this map the bipartite graph 
\begin{equation}
G_{\gamma_1} = (V_A \sqcup V_B, E_{\gamma_1}), \qquad E_{\gamma_1} = \{(a_i, b_j) \in V_A \times V_B \text{ | } (\gamma_1)_{ij} = 1 \},
\end{equation}
where $V_A = \{a_1, \dots, a_{n_A}\}$ and $V_B = \{b_1, \dots, b_{n_B}\}$ index the qubits of codes $A$ and $B$, respectively. The total number of physical CNOT gates is therefore $|E_{\gamma_1}|$, which we refer to as the \emph{weight} $w(\gamma_1)$ of the coupling. Let us define the maximum row and column degrees 
\begin{equation}
    \qquad \Delta_A(\gamma_1) = \max_{i \in V_A} \sum_j (\gamma_1)_{ij}, \qquad \Delta_B(\gamma_1) = \max_{j \in V_B} \sum_i (\gamma_1)_{ij}, 
\end{equation}
such that the overall maximum degree of $\gamma_1$ is given by
\begin{equation}
    \Delta(\gamma_1):= \max \left\{ \Delta_A(\gamma_1), \Delta_B(\gamma_1) \right\}.
\end{equation}
Since all gates in $U_{\gamma_1}$ are CNOTs from code $A$ to $B$, two such gates can be executed in parallel if and only if they do not share a physical qubit. Equivalently, a valid schedule implementing $U_{\gamma_1}$ is a partition of the edge set $E_{\gamma_1}$ into disjoint matchings of the graph $G_{\gamma_1}$, where each matching corresponds to one time-step. The minimum number of time steps required is given by the edge-chromatic number of the bipartite graph $G_{\gamma_1}$. Proposition~\ref{prop:gamma1-depth} follows by K\"onig's line colouring theorem.
\begin{proposition}
    Let $\gamma_1 \in \mathbb F_2^{n_A\times n_B}$ be a feasible coupling, and let $G_{\gamma_1}$ be its associated bipartite interaction graph. Then, the minimum circuit depth required to realise the unitary $U_{\gamma_1}$ is $\Delta(\gamma_1)$. 
    \label{prop:gamma1-depth}
\end{proposition}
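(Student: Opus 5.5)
The plan is to prove matching lower and upper bounds on the depth, where a circuit of depth $D$ for $U_{\gamma_1}$ means $D$ layers of CNOT gates $\mathrm{CNOT}_{a_i\to b_j}$ with $(\gamma_1)_{ij}=1$, each layer acting on pairwise disjoint qubits. As noted just before the statement, all gates in Eq.~\eqref{eq:physical-cnot} commute, so any ordering of the edge set $E_{\gamma_1}$ realises $U_{\gamma_1}$. A schedule is therefore exactly a partition of $E_{\gamma_1}$ into matchings of $G_{\gamma_1}$. Since $\Delta(\gamma_1)$ is the maximum vertex degree of $G_{\gamma_1}$ (row sums give degrees of $a_i$, column sums degrees of $b_j$), the claim is that the edge-chromatic number $\chi'(G_{\gamma_1})$ equals $\Delta(G_{\gamma_1})$.

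For the lower bound, I pick a vertex $v$ of maximum degree $\Delta(\gamma_1)$. Its incident edges pairwise share the qubit $v$, so no two of them can be in the same layer, and at least $\Delta(\gamma_1)$ layers are needed.

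For the upper bound, I invoke K\"onig's line colouring theorem: every bipartite (multi)graph admits a proper edge colouring with $\Delta$ colours. To make the argument self-contained, I would sketch the standard alternating-path proof by induction on the edges. Colour the edges one at a time with the palette $\{1,\dots,\Delta\}$. When adding an edge $(a,b)$, the vertex $a$ misses some colour $\alpha$ and $b$ misses some colour $\beta$, because each currently has fewer than $\Delta$ coloured edges. If $\alpha=\beta$, use that colour. Otherwise, take the maximal path starting at $b$ whose edges alternate between colours $\alpha$ and $\beta$. By bipartiteness this path cannot end at $a$: it would have to reach $a$ via an $\alpha$-edge, which $a$ does not have, and parity also rules out arriving at $a$ by a $\beta$-edge. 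Swapping $\alpha$ and $\beta$ along the path frees $\alpha$ at $b$, so $(a,b)$ can be coloured $\alpha$. Each colour class is a matching, i.e.\ one parallel time step.

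A concrete alternative to the induction is to embed $G_{\gamma_1}$ in a $\Delta$-regular bipartite graph by padding with dummy vertices and edges, then repeatedly remove perfect matchings using Hall's theorem. This also yields a polynomial-time scheduler.

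The only step with real content is the parity argument in the alternating-path swap, which is exactly where bipartiteness enters. Everything else is bookkeeping: identifying degrees with row and column sums, and using commutativity of the gates so that any matching decomposition realises the same unitary.
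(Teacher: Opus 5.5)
Your proposal is correct and follows the paper's argument. A schedule is a partition of $E_{\gamma_1}$ into matchings, so the minimal depth is the edge-chromatic number of the bipartite graph $G_{\gamma_1}$, which equals $\Delta(\gamma_1)$ by K\"onig's line colouring theorem. You go slightly further by stating the lower bound explicitly, sketching the alternating-path proof of K\"onig's theorem, and spelling out the restriction that the circuit uses only the gates $\mathrm{CNOT}_{a_i\to b_j}$ with $(\gamma_1)_{ij}=1$. The paper leaves that restriction implicit, but it is genuinely needed, since in-block CNOTs can, e.g.\ implement a fan-out in logarithmic depth.
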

Thus, depth minimisation is equivalent to minimising the largest number of incident gates on any physical qubit, and the optimisation problem can be instantiated with objectives such as $f(\gamma_1) = \Delta(\gamma_1)$, or lexicographically by minimising $\Delta(\gamma_1)$ and then weight $w(\gamma_1)$ to bias towards sparser implementations among all minimum-depth schedules. This optimisation problem can be encoded in a form suitable for constraint programming, by exploiting the affine parametrisation of $\mathcal A$. Given a $\gamma_1$ of the form in Eq.~\eqref{eq:affine-family}, the coefficients $y_t$ are promoted to Boolean decision variables in the CP-SAT model, and the search for an optimal coupling corresponds to a search over binary assignments to the affine coordinates of $\mathcal A$. For each matrix entry $(\gamma_1)_{ij}$, we introduce an auxiliary Boolean variable $g_{ij}$, intended to represent the value of the corresponding entry in the coupling matrix $\gamma_1(y)$. By Eq.~\eqref{eq:affine-family}, this value is the parity 
\begin{equation}
    g_{ij} = (\gamma_1^{(0)})_{ij} \oplus \bigoplus \limits_{\substack{t \\ (\widetilde{\Delta}_t)_{ij}=1}} y_t
\end{equation}
where $\oplus$ is over $\mathbb F_2$, such that each $g_{ij}$ is constrained by one XOR relation involving the affine coefficient variables. The physical cost variables defined above are then linearly expressed in terms of $g_{ij}$. In particular, the row and column participation numbers, and weight are 
\begin{equation}
    r_i = \sum_{j=1}^{n_B} g_{ij}, \qquad c_j = \sum_{i=1}^{n_A} g_{ij}, \qquad W = \sum_{i,j} g_{ij}.
\end{equation}
The circuit-depth is represented by an integer variable $D$, with the constraints
\begin{equation}
    r_i \leq D, \quad c_j \leq D, \qquad \forall i\in V_A, \quad  j\in V_B.
\end{equation}
By Proposition~\ref{prop:gamma1-depth}, any admissible assignment satisfies $D \geq \Delta(\gamma_1)$, and minimising $D$ is equivalent to minimising the depth of the associated CNOT circuit. Optional support constraints can be incorporated by fixing selected variables $g_{ij}$ to zero using a mask. In the present implementation, we primarily minimise $D$, optionally imposing
bounds on weight, row degree, or column degree. We also tested weighted
objectives such as $\lambda D+W$, with $\lambda>n_An_B$ sufficiently large to ensure that depth-reduction is preferred over possible increase in weight. In practice, aggressively minimising weight can dramatically reduce the fault-distance of the circuit. Hence, we also consider depth-only optimisation. 

Finally, the formulation above depends on the chosen basis of the linear part of the affine family. Although different bases define the same feasible set, they need not be computationally equivalent from the perspective of constraint programming. This motivates the sparse basis replacement heuristic described next.

\subsection{Sparse basis replacement for admissible degree-$1$ families}
\label{ssec:sparse-basis}

The algorithm~\ref{alg:hom1-family} described above yields an exact but otherwise arbitrary basis of $\HomI{B}{A}$. Therefore, the resulting generators can be dense. Sparse bases for null spaces are well-studied objects~\cite{coleman1986null, gilbert1987computing, coleman1987null}. In our setting, replacing dense generators by sparser and more structured ones can be advantageous heuristically, since it can expose additional combinatorial structure in the search space consisting of an already constructed affine family, and exact optimisation performance is known to depend strongly on how the search space is represented and explored~\cite{huang2021branch}. Indeed, consider a previously constructed affine family of admissible couplings as described in~\eqref{eq:affine-family}. A dense basis means that one $x_i$ flips many entries in $\gamma_1$ and therefore perturbs many row and column sums at once. Sparse basis replacement reduces the support of individual search directions, and therefore reduces the coupling between coordinate variables $x_i$ and the bottleneck row/column degree constraints. If additionally the basis has some structure, then each $x_i$ acts more locally, which provides the solver with a more decomposed and interpretable search space. As a basic heuristic, we adopt a replacement basis adapted to the parity-check structure of codes $A$ and $B$. Specifically, for an affine family $\mathcal A$ as presented in Eq.~\eqref{eq:affine-family}, let
\begin{equation}
    V:=\operatorname{span}\{\widetilde{\Delta}_1,\dots,\widetilde{\Delta}_m\}\subseteq \mathbb F_2^{n_A\times n_B},
\end{equation}
denote its linear part. Let $h_p \in\mathbb F_2^{n_B}$ and $z_s \in F_2^{n_A}$ 
denote the column-vector transposes of the $p^{\mathrm{th}}$ row of $H_X^B$ and the $s^{\mathrm{th}}$ row of $H_Z^A$, respectively. We consider candidates consisting of rank-$1$ matrices of the form 
\begin{align}
    e_i h_p^{\mathsf T},& \qquad e_i\in\mathbb F_2^{n_A},\quad h_p\in\mathbb F_2^{n_B},\\
    z_s e_j^{\mathsf T},& \qquad z_s\in\mathbb F_2^{n_A},\quad e_j\in\mathbb F_2^{n_B},
\end{align}
where $e_i, e_j$ are column vectors. The choice of rows of $H_X^B$ and $H_Z^A$ reflects the directional Pauli propagation of a CNOT coupling from $A$ to $B$, and corresponds to row-local or column-local moves dressed by parity-check patterns. For instance, a move $e_i h_p^{\mathsf T}$ means that row $i$ of $\gamma_1$, i.e.\ qubit $i$ in $A$, is coupled to the qubits of $B$ appearing in the $X$-check $h_p$. Similarly, a move $z_s e_j^{\mathsf T}$ couples qubit $j$ in $B$ to exactly those qubits in $A$ lying on the support of the $Z$-stabilizer $z_s$. Among these candidates, we retain only those that already lie in the linear space $V$, and then extract a basis of their span. If this span is all of $V$, the result is an equivalent reparametrisation of the original affine family, i.e. a change of the coordinate description of the search space. Otherwise, we retain the original parametrisation. This heuristic is summarised in Algorithm~\ref{alg:sparse-basis}. In practice, we also use these sparse moves as local coordinates for the solver, while retaining a small number of more global generators that capture coordinated deformations of $\gamma_1$ which are not well captured by purely row- or column-local moves. This balances the minimal-support search directions with added global freedom to navigate the affine family efficiently.

\section{Applications and benchmarks}
\label{sec:applications}

Having developed the theoretical framework of homomorphic CNOTs from chain maps and their discovery, we now present concrete applications for which we obtain favourable spacetime tradeoffs as compared to previous techniques.     

\begin{table}[ht!]
\centering
\begin{tabular}{l l l c c c}
\hline
Code $A$ & Code $B$ & $\Delta$ & $w$ & $(d_X^{\rm gad},d_Z^{\rm gad})$ & - \\
\hline
6.6.6 color-3 & surface-3 & 2 & 9 & $(3, 3)$ & \\
6.6.6 color-5 & surface-5 & 2 & 27 & $(5, 5)$ & \\
6.6.6 color-7 & surface-7 & 2 & 59 & $(7, 7)$ & \\
\hline
surface-4 & $[[36,8,4]]$ BB & 2 & 16 & $(4, 4)$ & \\
surface-5 & $[[36,8,4]]$ BB & 3 & 20 & $(4, 4)$ & \\
surface-5 & $[[30,8,4]]$~\cite{webster2026} & 3 & 34 & $(4, 4)$ & \\
surface-6 & $[[20,2,6]]$~\cite{berthusen2025} & 2 & 36 & $(6, 6)$ & \\
          & $[[72,12,6]]$~\cite{Bravyi_2024} & 4 & 86 & $(4, 6)$ & \\
surface-7 & $[[72,12,6]]$~\cite{berthusen2025} & 4 & 120 & $(4, 6)$ & \\
\hline
6.6.6 color-5 & $[[30,8,4]]$~\cite{webster2026} & 3 & 28 & $(4,4)$ & \\
 & $[[36,8,4]]$ BB & 3 & 32 & $(4, 4)$ & \\
 & $[[23, 1, 7]]$~\cite{jain2025transversal} & 3 & 44 & $(5,6)$ & \\
& $[[54,14,5]]$~\cite{galimova2026independent} & 4 & 47 & $(5,3)$ & \\
\hline
$[[27,3,3]]$~\cite{li2025} & $[[18, 2, 3]]$ & 1 & 18 & $(3, 3)$ & \\
                           & Steane & 2 & 11 & $(3, 3)$ & \\
                           & $\textrm{Steane}^{\oplus 3}$ & 2 & 33 & $(3, 3)$ & \\
                           & surface-3 & 2 & 11 & $(3, 3)$ & \\
                           & $[[10,2,3]]$ & 2 & 17 & $(3, 3)$ & \\
\hline
$[[17,1,5]]$~\cite{jain2025transversal} & $[[23,1,7]]$ & 3 & 51 & $(5, 5)$ & \\
 & $[[20,2,6]]$~\cite{berthusen2025} & 2 & 26 & $(5, 5)$ & \\
 & surface-$5$ & 2 & 25 & $(5, 5)$ & \\
\hline
$[[15,1,3]]$ & surface-3 & 2 & 9 & $(3, 3)$ & \\
 & Steane & 1 & 7 & $(3, 3)$ & \\
             & $[[15, 7, 3]]$ & 2 & 17 & $(3, 3)$ & \\
             & $[[24, 8, 3]]$~\cite{gu2026qgpu} & 2 & 21 & $(3, 3)$ & \\
             & $[[16, 6, 4]]$~\cite{tansuwannont2026} & 2 & 20 & $(3, 4)$ & \\
             & $[[23, 1, 7]]$~\cite{jain2025transversal} & 1 & 15 & $(3, 7)$ & \\
             & surface-7 & 2 & 21 & $(3, 7)$ & \\
\hline
$[[36,4,4]]$~\cite{dasu2025} & $[[16,4,4]]$~\cite{Goto_2024} & 3 & 36 & $(4,4)$ & \\
                             & $[[30,8,4]]$~\cite{webster2026} & 5 & 114 & $(4,4)$ & \\
\hline
$[[49,1,5]]$~\cite{Bravyi_2012} & surface-5 & 2 & 25 & $(5,5)$ & \\
                                & $[[20,2,6]]$~\cite{berthusen2025} & 2 & 30 & $(5,6)$ & \\
 & & & & & \\ 
\hline
\multicolumn{6}{l}{\textbf{Flagged homomorphic CNOT circuits}} \\
Code $A$ & Code $B$ & $\Delta$ & $w$ & $(d_X^{\rm gad},d_Z^{\rm gad})$ & $n_{\mathrm{flags}}$ \\
\hline
surface-$7$ & $[[42, 10, 6]]$ BB & 4 & 158 & $(6, 6)$ & 27 \\
\hline
surface-$7$ & $[[72, 12, 6]]$~\cite{Bravyi_2024} & 4 & 130 & $(6, 6)$ & 5 \\
\hline
color-$7$ & $[[62, 10, 6]]$~\cite{webster2025} & 5 & 164 & $(6, 6)$ & 18 \\
\hline
\end{tabular}
\caption{\small \textbf{Examples of homomorphic CNOT circuits derived from chain maps.} Here, the chain maps have full rank, i.e. $r=\min (k_A, k_B)$ and realise $r$ CNOTs between logical qubits in order $(1,2,...,r)$, $\Delta$ indicates the circuit depth of the corresponding physical unitary, and $w$ the number of physical CNOTs. Circuit-level distances were verified to be $(d_X^{\rm gad},d_Z^{\rm gad})$ with \texttt{Stim}~\cite{Gidney_2021}. The $\oplus l$ superscript indicates $l$ code blocks. Here, BB indicates a bivariate bicycle code~\cite{Bravyi_2024}. The number of gates $w$ for the flagged circuit includes the CNOT pairs required to couple the $n_{\mathrm{flag}}$ flag ancillae.}
\label{tab:circ_table}
\end{table}

\subsection{Heterogeneous logic}

The most immediate application of bespoke chain maps is to facilitate the interaction of dissimilar codes in a heterogeneous architecture. Such interactions are heavily used in QECC architectures in which different operations, i.e. Clifford and non-Clifford gates and gadgets, are performed on different codes~\cite{Sullivan_2024, tan2025, yoder2025, webster2026, cain2026}. Many of these architectures leverage codes that are closely related structurally, such as between 2D and 3D codes; however, restricting oneself to codes that are related in this way greatly limits the available pairings. Previously, the only way to implement arbitrary inter-code interactions was to construct a universal adapter~\cite{swaroop2026} and perform joint logical measurements across the two code blocks. For example, a $\ket{T}$ magic state would then be injected onto a given logical qubit $\ket{\psi}$ using the following sequence of Pauli measurements and Clifford corrections:

\begin{figure}[h]
    \centering
    \includegraphics[width=0.9\linewidth]{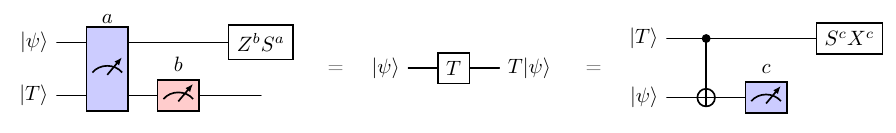}
    \caption{\small \textbf{Methods to inject magic states.} (Left) Logical circuit to inject $\ket{T}$ magic states using universal adapters~\cite{swaroop2026}. The blue measurement with measurement output $a$ is a joint ZZ measurement. The red measurement with measurement output b is an X measurement. (Right) circuit to inject $\ket{T}$ states using an inter-block logical CNOT gate.}
    \label{fig:placeholder}
\end{figure}

While flexible, universal adapters to measure a weight-$t$ joint logical Pauli measurement require a number of ancilla qubits scaling as $\tilde{O}(td)$, as well as $O(d)$ rounds of measurement, to satisfy fault-tolerance. Such overheads are substantial in practice. The alternative to joint $ZZ$ measurements for implementing state injection is to use CNOT gates. This is the approach taken in, for example, architectures based on code switching~\cite{anderson2014, tan2025}, or transversal surface code architectures~\cite{sunami2025}. While more efficient than adapters in terms of spacetime overheads, the available code pairings are limited. In contrast, using bespoke chain maps allows us to achieve the flexibility of adapters while maintaining overheads that are closer to transversal gates.

\begin{figure}[!t]
    \centering
    \begin{subfigure}[b]{0.45\textwidth}
        \centering
        \includegraphics[width=\textwidth]{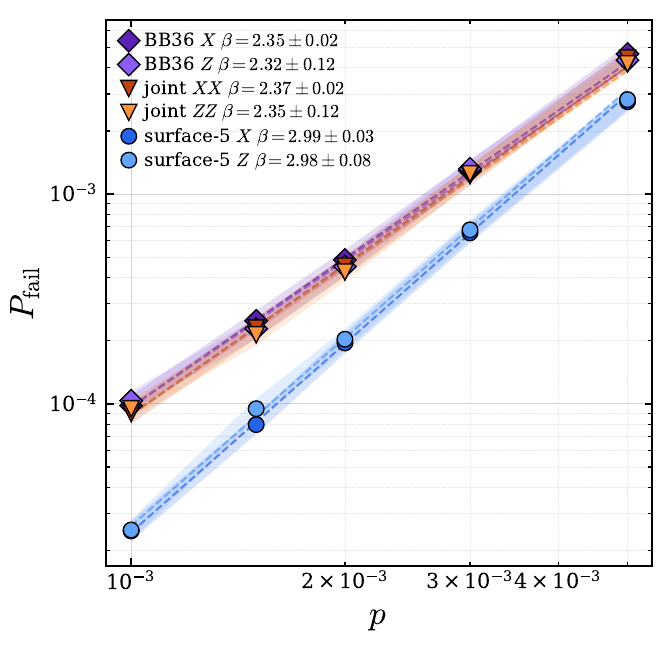}
        \caption{}
    \end{subfigure}
    \hfill
    \begin{subfigure}[b]{0.45\textwidth}
        \centering
        \includegraphics[width=\textwidth]{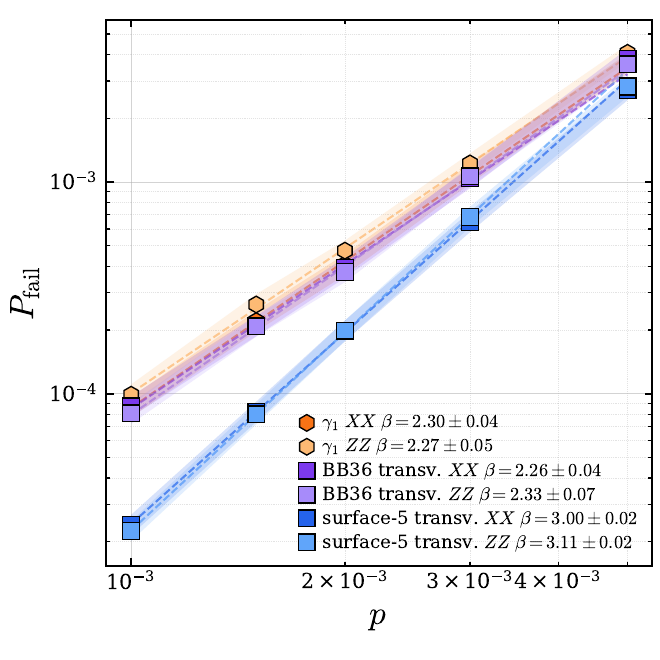}
        \caption{}
    \end{subfigure}
    \caption{\small \textbf{Simulation of a homomorphic CNOT via chain maps.} (a) Logical error rate for a memory experiment, with code $A$ and code $B$ idling for $R = 2\max(d_A, d_B)$ rounds under circuit-level noise. Here, code $A$ is a distance-$5$ rotated surface code, and code $B$ the $[[36, 8, 4]]$ bivariate bicycle code in the polynomial family of Ref.~\cite{Bravyi_2024}. (b) A depth-$3$ distance-preserving homomorphic $\overline{\mathrm{CNOT}}_{A_0\to B_0}$ from code $A$ to code $B$, compared with transversal $\overline{\mathrm{CNOT}}$s in the respective codes. The full experiment consists of $4$ logical $\overline{\mathrm{CNOT}}$ gates interleaved with QEC cycles. Dashed lines indicate power-law fits, and the fitted exponent $\beta$ characterises the effective scaling of the logical error rate; shaded areas indicate $95\%$ Wilson confidence intervals.} 
    \label{fig:surface-5_to_36_BB}
\end{figure}

In Table~\ref{tab:circ_table}, we demonstrate the broad applicability of using chain maps to implement inter-code entangling gates by listing various non-trivial code pairings. For simplicity of presentation, the unflagged homomorphic CNOT circuits presented in the table are full rank, i.e. couple $r=\min (k_A, k_B)$ logical degrees of freedom independently. For each chain map $\gamma_1$, we report the depth $\Delta$ of the resulting physical CNOT circuit, and the total number $w$ of physical CNOT gates. The quoted circuit-level distances $d_X^{\rm gad}$ and $d_Z^{\rm gad}$ were verified using \texttt{Stim}~\cite{Gidney_2021}. Many of the codes appearing in the first column are sources of magic, yielding a variety of possible interfaces for heterogeneous architecture design. We also include examples in which the logical qubits are distributed across multiple code blocks, as described in Appendix~\ref{assec:multi-block-targeting}.

The chain map conditions guarantee that the physical unitary $U_{\gamma_1}$ in Eq.~\eqref{eq:physical-cnot} preserves the product codespace and implements the target logical CNOT action. However, they are not sufficient to guarantee fault-tolerance of the realised circuit. Once $U_{\gamma_1}$ is implemented as a finite-depth CNOT circuit, faults occurring at intermediate times may propagate through later CNOTs into correlated errors. In the non-transversal case, certain error patterns may reduce the effective distance of the encoded operation. It is therefore notable that, for many of the examples in Table~\ref{tab:circ_table}, numerous distance-preserving solutions can nevertheless be found. When such a circuit is readily found, it constitutes a direct alternative to generalised surgery approaches. We numerically simulate the performance of  homomorphic CNOT gates compared against idling-memory baselines. Fig.~\ref{fig:surface-5_to_36_BB} shows the case of a $\overline{\mathrm{CNOT}}_{A_0 \to B_0}$ logical gate from a distance-$5$ surface code to the $[[36,8,4]]$ bivariate bicycle defined using polynomials $a=x^3+y+y^2$ and $b=y^3+x+x^2$ using $l=m=3$ as per Ref.~\cite{Bravyi_2024}. For comparison, we consider a baseline where both code blocks idle for $R = 2\max(d_A,d_B)$ (here 10) rounds, with logical performance shown in Fig.~\ref{fig:surface-5_to_36_BB}(a). The logical CNOT benchmark in~Fig.~\ref{fig:surface-5_to_36_BB}(b) then consists of $R$ rounds of interleaved QEC and homomorphic CNOT. We used a standard circuit-level depolarising noise model: before each round, each data qubit undergoes single-qubit depolarisation with probability $p$; each physical CNOT is followed by two-qubit depolarisation with probability $p$; resets and measurements fail independently with probability $p$. The syndrome extraction circuit for the rotated surface-$5$ code is the ubiquitous interleaved schedule, and that of the BB code follows the approach of Ref.~\cite{Bravyi_2024}. We run simulated experiments in both $(X_A,X_B)$ and $(Z_A,Z_B)$ bases, and given $P_{A,\mathrm{any}}$ and $P_{B,\mathrm{any}}$ the probabilities that any logical observable fails in the corresponding block, we define the joint any-logical failure probability by
\begin{equation}
    P_{\mathrm{joint,any}}
      = 1-(1-P_{A,\mathrm{any}})(1-P_{B,\mathrm{any}}).
\end{equation}
The plotted quantity is then normalised per logical and per round as
\begin{equation}
    P_{\mathrm{joint}}
      = 1-(1-P_{\mathrm{joint,any}})^{1/(k_A+k_B)},
    \qquad
    P_{\mathrm{joint}}^{\mathrm{round}}
      = 1-(1-P_{\mathrm{joint}})^{1/R}.
\end{equation}
The same per-logical normalisation is used for the CNOT experiments, with $P_{\mathrm{any}}$ taken to be the any-logical failure probability of the full CNOT protocol. The resulting logical error rates show that the homomorphic map provides a direct heterogeneous entangling operation with performance on the scale of the relevant memory and transversal-CNOT references, despite acting between codes with different structure and rate.

\begin{figure}[t!]
    \centering
    \includegraphics[width=0.7\linewidth]{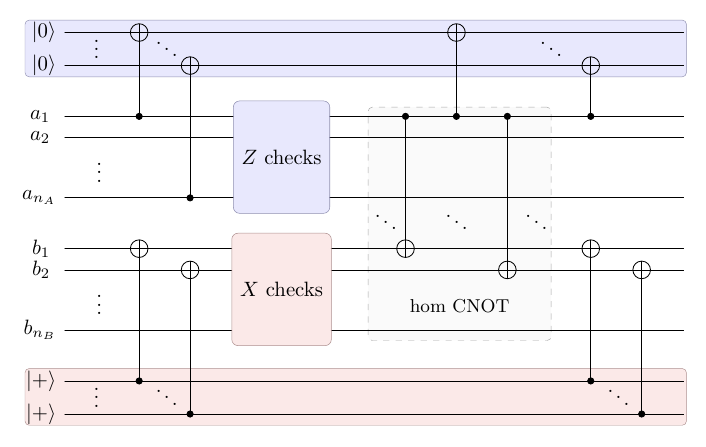}
    \caption{\small \textbf{Schematic circuit describing a flagged homomorphic CNOT between code $A$ and code $B$.} The blue and red boxes respectively describe $Z$ and $X$-stabilizer and flag measurements of codes $A$ and $B$. The logical homomorphic CNOT action indicated by the dashed box has arbitrary depth and gate count $n_{\mathrm{CNOT}}$.}
    \label{fig:flagged-hom-cnot}
\end{figure}

Without successful QEC-aware heuristics or formal guarantees, searching this space for a fault-tolerant solution is challenging. When such a solution is not easily found, the affine search space in Sec.~\ref{ssec:affine-restrictions} can still be used to find chain maps that give rise to partially distance-preserving homomorphic CNOT circuits. Such a $U_{\gamma_1}$ realised as a depth-$\Delta$ circuit can be made pieceably fault-tolerant~\cite{Yoder_2016} by inserting QEC cycles between different timesteps, so long as the intermediate codes obtained after every deformation layer have distance at least $\min(d_A, d_B)$. A particularly appealing approach is to start from a partially distance-preserving solution, and supplement the circuit with additional stabilizer and flag measurements to recover the full protocol distance. We show a schematic for such a procedure in Fig.~\ref{fig:flagged-hom-cnot}. Here, the role of the additional measurements is to detect harmful Pauli-$X$ errors on code $A$ and Pauli-$Z$ on code $B$ propagating between the CNOT layers. While this incurs additional measurement ancilla qubits, we emphasise that the flag measurements are single-shot in general, and empirically observe that the extra stabilizer measurements only require a single measurement round, when starting from a sufficiently fault-tolerant solution, leading to a favourable time overhead. These measurement outcomes are then incorporated into decoding during the homomorphic CNOT. In our examples we find a small and sufficient number of flags using a greedy pruning approach, and leave the design of adequate and scalable flagged circuits for future work~\cite{flag-cnot-scheme}. The flagged homomorphic CNOT examples in Table~\ref{tab:circ_table} show that a modest number of flags can be sufficient to restore the target circuit-level distance for otherwise non-fault tolerant circuits. This behaviour is numerically demonstrated in Fig.~\ref{fig:surface-7-to-BB72}, for a homomorphic CNOT from a distance-$7$ surface code to the $[[72,12,6]]$ bivariate bicycle code. Since the goal is to assess the performance of the homomorphic CNOT gate itself, for simplicity the numerical benchmarks henceforth use a circuit-level noise model tailored to the MPP (multi-Pauli product) syndrome-extraction circuits used throughout these simulations, using \texttt{Stim}. At the start of each QEC round and CNOT gate, data qubits undergo single-qubit depolarising noise with probability $p$. Each physical CNOT in a homomorphic or transversal CNOT layer is followed by two-qubit depolarising noise with probability $p$. Resets and measurements fail independently with probability $p$, and we set idle noise to zero. Stabilizer checks are implemented as MPP measurements whereby the Pauli product itself is measured ideally, while the reported syndrome bit is flipped with probability $p$. Here, the unflagged circuit has asymmetric circuit distance $(d_X,d_Z)=(4,6)$, while the pruned flagged construction restores distance $6$ using only five flags. We find that compared to the idling baseline in Fig.~\ref{fig:surface-7-to-BB72}(a), the flagged $\overline{\mathrm{CNOT}}_{A_0\to B_0}$ protocol improves the low-error scaling relative to the unflagged map, and performs comparably to the transversal CNOT on two $[[72,12,6]]$ blocks. 

\begin{figure}[t!]
    \centering
    \begin{subfigure}[b]{0.45\textwidth}
        \centering
        \includegraphics[width=\textwidth]{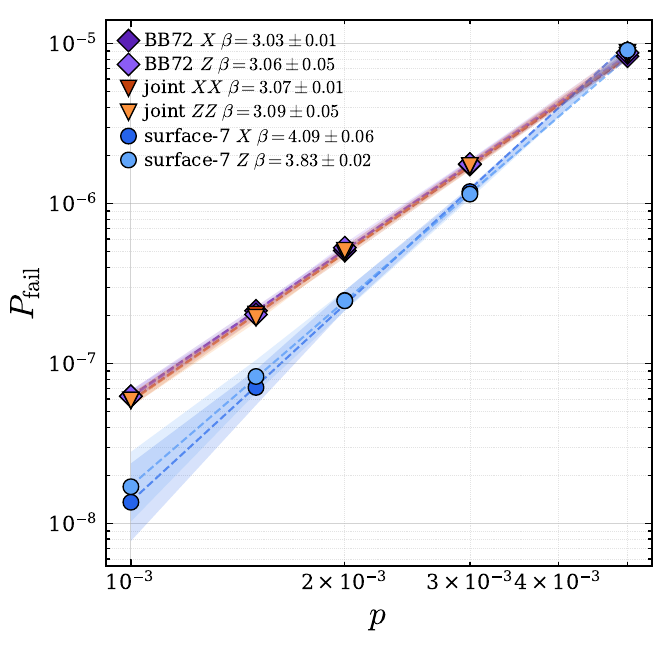}
        \caption{}
    \end{subfigure}
    \hfill
    \begin{subfigure}[b]{0.45\textwidth}
        \centering
        \includegraphics[width=\textwidth]{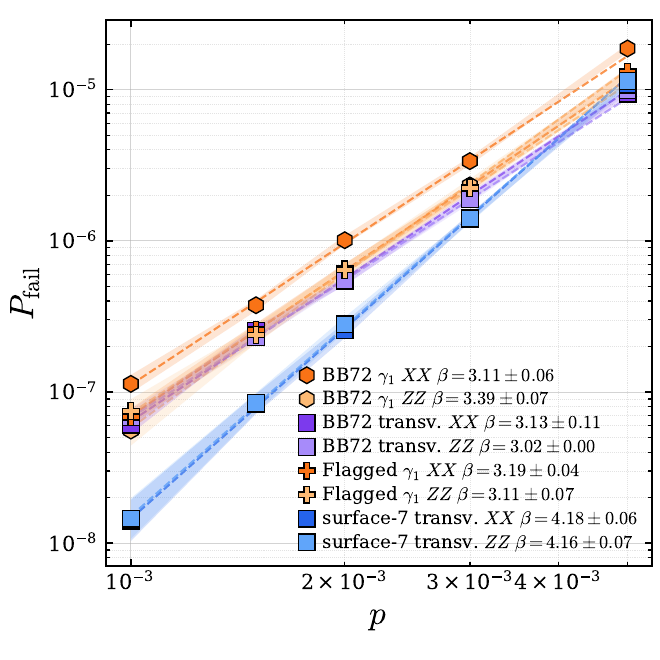}
        \caption{}
    \end{subfigure}
    \caption{\small \textbf{Simulation of a flagged chain map.} (a) Logical error rate for a memory experiment, with code $A$ and code $B$ idling for $R = 2\max(d_A, d_B)$ rounds under circuit-level noise. Here, code $A$ is a distance-$7$ rotated surface code, and code $B$ the $[[72, 12, 6]]$ bivariate bicycle code in the polynomial family of Ref.~\cite{Bravyi_2024}. (b) A depth-$4$ distance-preserving homomorphic $\overline{\mathrm{CNOT}}_{A_1\to B_1}$ from code $A$ to code $B$, compared with transversal $\overline{\mathrm{CNOT}}$s in the respective codes. The full experiment consists of $6$ logical $\overline{\mathrm{CNOT}}$ gates interleaved with QEC cycles.  The unflagged map has distance $d_X=4$ and $d_Z=6$, while the flagged map recovers the full code distance. Dashed lines indicate power-law fits, and the fitted exponent $\beta$ characterises the effective scaling of the logical error rate; shaded areas indicate $95\%$ Wilson confidence intervals.}
    \label{fig:surface-7-to-BB72}
\end{figure}

\subsubsection{Targeted inter-block CNOT gates}
\label{sec:targeted}

As a closely related application, simply performing inter-block CNOT gates (in a heterogeneous or homogeneous architecture) is a valuable computational primitive. However, implementing anything other than a depth-1 transversal CNOT between paired logical qubits of two identical code blocks is comparatively much more difficult. In general, targeted inter-block CNOT gates can be implemented again using universal adapters, bringing with it the aforementioned spacetime overheads. Certain codes admit bespoke methods for implementing arbitrary inter-block CNOT gates that outperform adapters. Here, we show that chain maps can outperform these bespoke methods as well.

One such example is the ConiQ~\cite{liu2025} compilation framework for the many-hypercubes code~\cite{Goto_2024}. Here we focus on one instruction of their compilation approach which they call the automorphism-assisted hierarchical addressing (AHA) scheme. The general idea of this scheme is that intra-block automorphisms implementing logical CNOT gates, when combined with transversal CNOT gates, are sufficient to generate arbitrary addressable logical CNOT gates between code blocks. We can search for bespoke chain maps implementing these addressable inter-block CNOT gates directly. Let us explicitly consider the $[[16,4,4]]$ code: using the ConiQ AHA scheme, targeted inter-block CNOT gates require an ancilla $[[16,4,4]]$ code block, and depth-$6$, see Fig.~7 of Ref.~\cite{liu2025}. Searching for bespoke chain maps gives depth-2 circuits for any rank-1 $\Gamma_Z(\gamma_1)$ between two $[[16,4,4]]$ codes that are fault-tolerant to distance-4. From this we obtain a $4.5\times$ reduction in spacetime overhead, without even considering the required error correction. When we include Steane-style QEC after every transversal CNOT in the AHA scheme, compared to a single round of Steane-style QEC after the chain map, we obtain greater than a $20\times$ reduction in spacetime volume. For the $[[64,8,8]]$ many-hypercubes code, we find a depth-4 targeted inter-block CNOT gate that is fault-tolerant to distance ($d_X=6, d_Z=8$). Fig.~\ref{fig:mhc_to_self} presents numerical simulations of a chain map implementing $\overline{\textrm{CNOT}}_{A_1 \rightarrow B_1}$ between two codes $A$, $B$, both of which are the $[[64,8,8]]$ MHC code.

A similar example is provided by phantom codes~\cite{koh2026}, on which arbitrary inter-block CNOT circuits between identical code blocks can be implemented with two layers of transversal CNOT gates and automorphisms. Two instances of phantom codes are the $[[16,3,4]]$ quantum Reed-Muller code and the $[[20,2,6]]$ CSD code~\cite{berthusen2025}. For both codes, we find that all arbitrary inter-block, unidirectional CNOT circuits can be implemented by a distance-preserving, depth-2 chain map. This matches the depth of the method proposed in Ref.~\cite{koh2026}; however, the chain maps only require a single round of error correction, whereas two rounds are required for the `phantom' method---one after each transversal CNOT. For larger phantom codes, the depth of the chain map may be larger than two. Depending on the cost of error correction and the number of flags required to make it fault-tolerant, chain maps may still be lower overhead.

\begin{figure}[t!]
    \centering
    \begin{subfigure}[b]{0.45\textwidth}
        \centering
        \includegraphics[width=\textwidth]{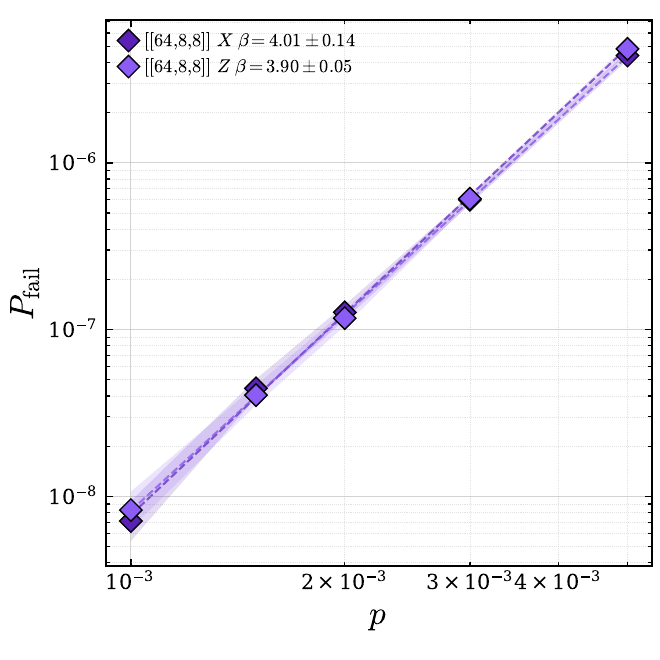}
        \caption{}
    \end{subfigure}
    \hfill
    \begin{subfigure}[b]{0.45\textwidth}
        \centering
        \includegraphics[width=\textwidth]{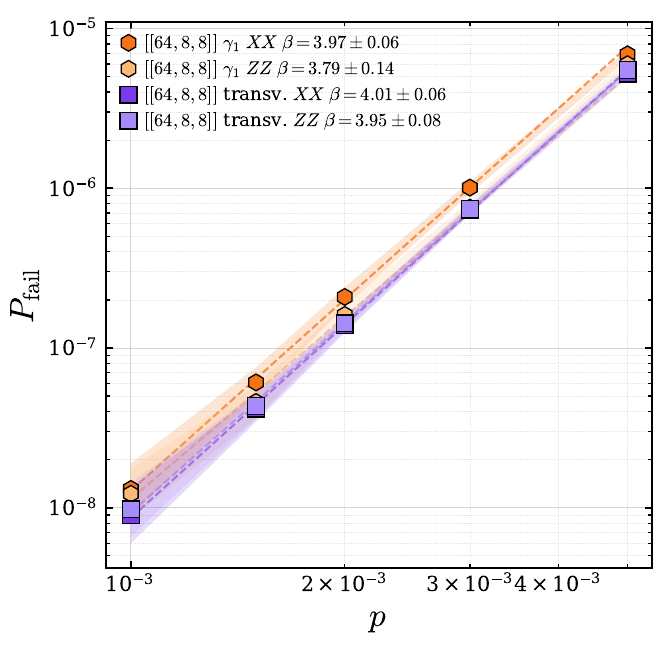}
        \caption{}
    \end{subfigure}
    \caption{\small \textbf{Simulation of a higher-distance chain map.} (a) Logical error rate for an idling memory experiment, where both codes $A$ and $B$ are $[[64,8,8]]$ many-hypercubes codes of Ref.~\cite{Goto_2024}, idling for $R = 2d$ rounds under circuit-level noise. (b) A depth-$3$ circuit realising a $\overline{\mathrm{CNOT}}_{A_1\to B_1}$ gate between the two $[[64,8,8]]$ blocks. The unflagged map has distance $d_X=6$ and $d_Z=8$, while the flagged map recovers the code distance. Dashed lines indicate power-law fits, and the fitted exponent $\beta$ characterises the effective scaling of the logical error rate; shaded areas indicate $95\%$ Wilson confidence intervals.}
    \label{fig:mhc_to_self}
\end{figure}

\subsection{Pauli product measurements}

Pauli product measurements (PPMs) and the underlying computational model of Pauli-based computation (PBC)~\cite{Bravyi_2016} have received significant attention since the rise in popularity of quantum low-density parity-check (qLDPC) codes. While qLDPC codes offer advantages over topological codes in terms of rate and distance, computing on them is relatively difficult owing to their high-density information storage. One approach that has recently seen substantial development is qLDPC code surgery~\cite{Cohen_2022, Cowtan_2024, Ide_2025, he2025, cowtan2026}, which facilitates PBC by enabling measurements of arbitrary logical Pauli operators. Universal computation across many logical qubits is achieved by introducing a source of magic and performing PPMs over several code blocks connected by bridges.

Generalised lattice surgery on qLDPC codes can be expressed in the framework of homological algebra, see Refs.~\cite{chang2026, Ide_2025}. Concretely, suppose we have a CSS code represented by a chain complex $B_\bullet$, on which we desire to measure a logical operator. Surgery introduces an ancilla system $A_\bullet$ and a chain map $\boldsymbol{\gamma} : A_\bullet \xrightarrow{} B_\bullet$, where $\boldsymbol{\gamma}  = (\gamma_1, \gamma_0)$ such that the following diagram commutes.
\begin{equation}
    \begin{tikzcd}
  {} 
    & C_1^A \arrow[r, "\partial_1^A"] \arrow[d, "\gamma_1"'] 
    & C_0^A \arrow[r, "\partial_0^A"] \arrow[d, "\gamma_0"'] 
    & C_{-1}^A  \\
  C_2^B \arrow[r, "\partial_2^B"'] 
    & C_1^B \arrow[r, "\partial_1^B"']
    & C_0^B 
    & {}
\end{tikzcd}
\label{eq:surgery}
\end{equation}
By considering Eq.~\eqref{eq:surgery} as a CSS code itself 
\begin{equation}
    AB_\bullet: C_1^A\oplus C_2^B \xrightarrow{\;\partial_1^{AB}\;} C_0^A\oplus C_1^B \xrightarrow{\;\partial_0^{AB}\;} C_{-1}^A\oplus C_0^B
\end{equation}
and measuring the stabilizer generators of this joint code-ancilla system, the desired logical operators of $B$ can be obtained. Surgery on qLDPC codes as presented here requires $O(d)$ time overhead and a number of ancilla qubits scaling with the weight of the logical to be measured. Recent advances have improved on the time overhead by leveraging redundancies in the ancilla systems~\cite{williamson2024, chang2026}.

\begin{figure}[t]
    \centering
    \includegraphics[width=\linewidth]{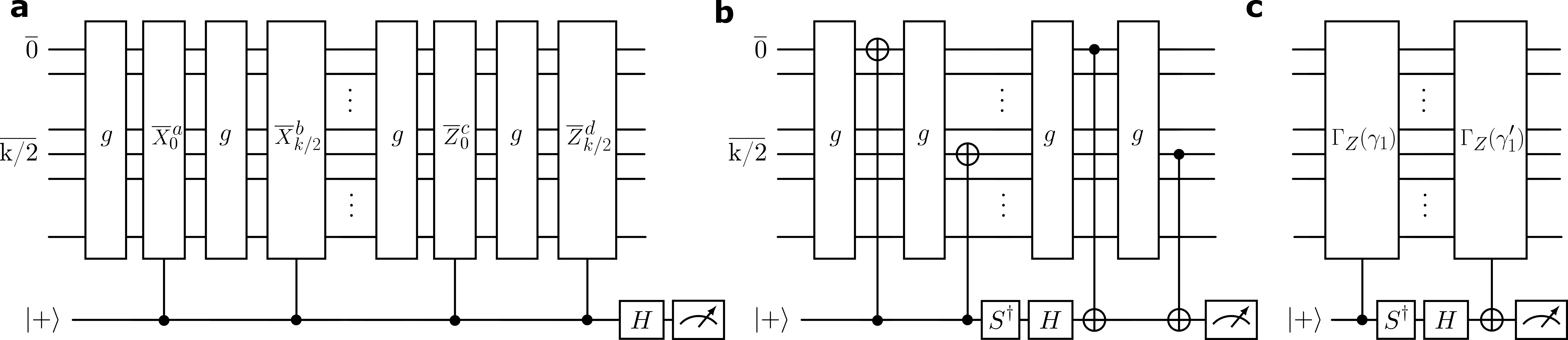}
    \caption{\small \textbf{Gadgets to perform PPMs via chain maps. a.} Circuit to (potentially non-fault-tolerantly) measure the seed operators $\overline{X}_0$ and $\overline{X}_{k/2}$ ($Z$) using some logical ancilla. \textbf{b.} CZ gates can be replaced with CX gates and the appropriate change of basis on the ancilla. \textbf{c.} $X/Z$ can be combined into a single $\Gamma_Z(\gamma_1)$ operation to reduce depth. }
    \label{fig:ppms}
\end{figure}

We focus on one specific instantiation of qLDPC code surgery as presented in Refs.~\cite{webster2025, webster2026}. There, the authors investigate surgery gadgets for a family of generalised bicycle (GB) codes~\cite{kovalev2013} which have the convenient property that a small number of logical \textit{seed operators}, when combined with automorphisms of the code, are sufficient to construct any logical operator of the code. Consequently, only a small number of surgery gadgets $A^i_\bullet$, and adapter/bridge systems to connect them, are required to measure an arbitrary Pauli product. The required overheads are presented in Table~1 of Ref.~\cite{webster2026}, and we note here that the $[[30,8,4]]$ and $[[62,10,6]]$ GB codes require 80 and 120 ancilla qubits, respectively, to construct the four required surgery gadgets, and $O(d)$ depth to measure the desired Pauli product fault-tolerantly.

\begin{figure}[!t]
    \centering
    \begin{subfigure}[b]{0.45\textwidth}
        \centering
        \includegraphics[width=\textwidth]{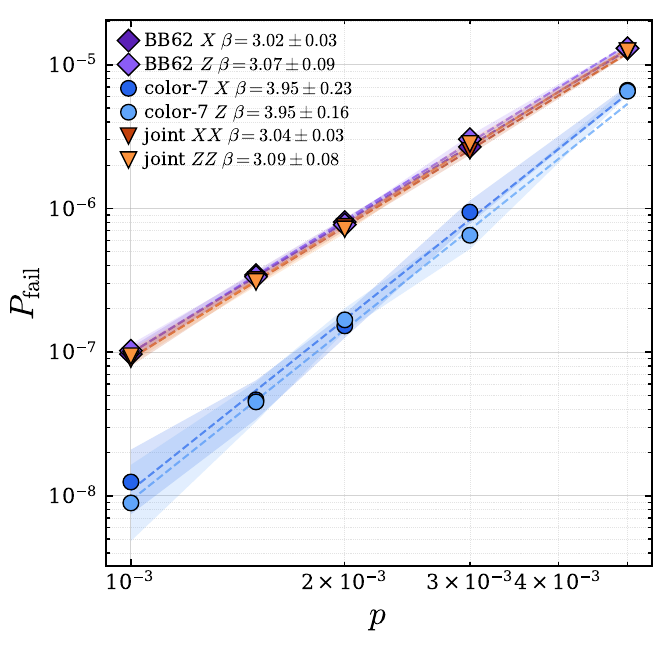}
        \caption{}
    \end{subfigure}
    \hfill 
    \begin{subfigure}[b]{0.45\textwidth}
        \centering
        \includegraphics[width=\textwidth]{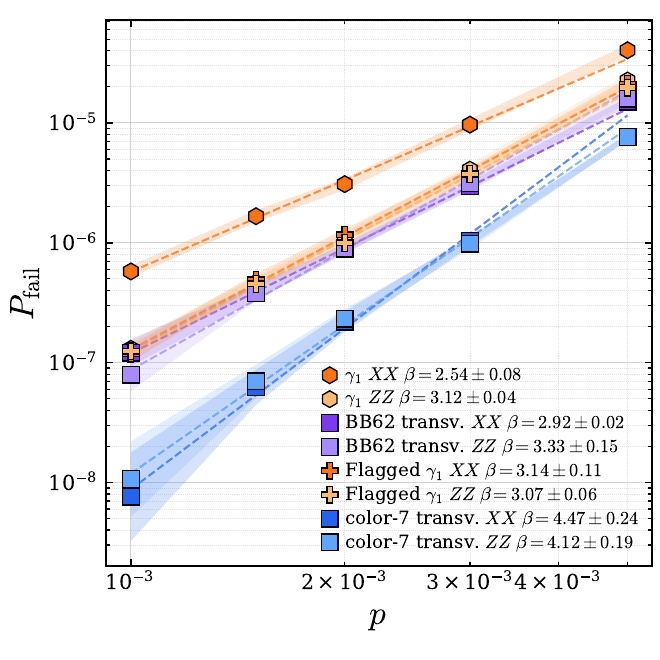}
        \caption{}
    \end{subfigure}
    \caption{\small \textbf{Simulation of a PPM-style circuit via chain maps.} (a) Logical error rate for an idling memory experiment, with code $A$ and code $B$ idling for $R = 2\max(d_A, d_B)$ rounds under circuit-level noise. Here, code $A$ is a distance-7 6.6.6 color code, and code $B$ is the $[[62,10,6]]$ generalised bicycle (GB) code~\cite{webster2025}. (b) Logical error rate for a logical FANOUT:  $\overline{\mathrm{CNOT}}_{A_0\to B_0} \overline{\mathrm{CNOT}}_{A_0\to B_5}$, from the 6.6.6 color-$7$ to the $[[62,10,6]]$ generalised bicycle code from Ref.~\cite{webster2025}. A fault-tolerant $(d_X = d_Z= 6)$ circuit requiring 18 flags is compared with an unflagged $(d_X=4,d_Z=5)$ circuit as well as transversal $\overline{\mathrm{CNOT}}$s in the respective codes. The full experiment consists of $d=6$ logical $\overline{\mathrm{CNOT}}$ gates interleaved with a QEC cycle. Dashed lines indicate power-law fits, and the fitted exponent $\beta$ characterises the effective scaling of the logical error rate; shaded areas indicate $95\%$ Wilson confidence intervals.
    }
    \label{fig:color7_to_BB62}
\end{figure}

Adapting this approach of measuring seed operators to instead use logical CNOT and CZ gates yields the circuits in Fig.~\ref{fig:ppms}(a),(b), where $g$ is some automorphism of the code, and $a,b,c,d\in\{0,1\}$, i.e., choosing to measure the operator or not. Note that an $S^\dagger$ gate on the ancilla may be needed to account for the phase in the controlled-$iXZ$ gate. We find fault-tolerant, depth-3 chain maps from the $[[30,8,4]]$ GB code to the $[[17,1,5]]$ 4.8.8 color code~\cite{Bombin_2006} for each of the four required $\Gamma_Z(\gamma_1)$'s, hence yielding a depth-12 circuit to measure an arbitrary Pauli $P \in \mathcal{\overline{P}}_8$ in the worst case. We obtain similar overheads when using the $[[37,1,7]]$ 6.6.6 color code to measure Pauli products on the larger $[[62,10,6]]$ GB code. Performing the logical FANOUT gate $CNOT_{A_0\to B_0} CNOT_{A_0\to B_5}$ can be done in depth-5; however, 18 flags are then required for it to be fault-tolerant to distance 6. Fig.~\ref{fig:color7_to_BB62} presents circuit-level simulations of this flagged FANOUT gate.

The depth may be reduced whenever $\overline{X}_0$ and $\overline{X}_{k/2}$ ($Z$) are measured following the same automorphism $g$, or if the desired Pauli is purely $X/Z$-type. To make the entire logical measurement fault-tolerant, it may be required to perform syndrome extraction between logical CNOTs, or to introduce a logical ancilla as a flag. Therefore, $O(d)$ time will also be required when using chain maps; however, we have reduced the number of ancilla qubits by $\sim 2$--$4\times$. Given the flexibility we have in designing the logical operator $\Gamma_Z(\gamma_1)$, we do not have to restrict ourselves to measuring $\overline{X}_0$ and $\overline{X}_{k/2}$ ($Z$) combined with automorphisms. Instead, we can consider applying the circuit in Fig.~\ref{fig:ppms}(c) in which the $X$- and $Z$-type operators are measured with a single $\Gamma_Z(\gamma_1)$. Sampling random FANIN/OUT between the $[[30,8,4]]$ GB code and the $[[17,1,5]]$ color code yields depth-3 chain maps that are fault-tolerant to distance-4. Most generally, we could consider replacing the self-dual $[[n,1,d]]$ with a code containing $k > 1$ logical qubits. Then, $k$ commuting Pauli products could be measured in parallel by choosing the appropriate $\Gamma_Z(\gamma_1) \in \mathbb{F}_2^{k \times k'}$.

\subsubsection{Grid Pauli product measurements}

One notable instantiation of a PPM scheme in which ancillary surgery systems are not required is the Grid Pauli product measurement (GPPM)~\cite{xu2025fast}. GPPMs are gadgets for hypergraph product (HGP) codes~\cite{Tillich_2014} that can measure Pauli products on any subgrid of the HGP logical qubits. This is facilitated by a degree-1 chain map $\gamma_1 \in \mathrm{Hom}_1(\mathcal{Q}', \mathcal{Q})$ between a data code $\mathcal{Q}$ and a suitably constructed ancilla code $\mathcal{Q}'$, whose associated physical unitary $U_{\gamma_1}$ implements a homomorphic CNOT.

An HGP code $\mathcal{Q} = \textrm{HGP}(H_1, H_2)$ is constructed from the tensor product of two length-2 classical chain complexes, $A_{\bullet}$ and $B_{\bullet}$ representing classical codes $[n_1,k_1,d_1], [n_2,k_2,d_2]$,
\begin{equation}
    A_\bullet: C_1^A \xrightarrow{\;\partial_1^A\;} C_0^A, \qquad B_{\bullet}: C_1^B \xrightarrow{\;\partial_1^B\;} C_0^B,
\end{equation}
with boundary maps given by the parity-check matrices $H_1 = \partial_1^A$, $H_2 = \partial_1^B$. The logical qubits of $\mathcal{Q}$ inherit a $[k_1] \times [k_2]$ grid structure arising from the tensor product. The ancilla code $\mathcal{Q}'$ is obtained by modifying the base classical complexes via puncturing or augmenting. 

Both puncturing and augmenting induce chain maps from the modified and original length-$2$ chain complexes. Concretely, if $H'$ is obtained from $H$ by puncturing or augmenting, then we have the valid chain map $\boldsymbol{\gamma} : H'_\bullet \xrightarrow{} H_\bullet$ with $\boldsymbol{\gamma} = (\gamma_1: C_1^{H'} \rightarrow C_1^{H}, \gamma_0: C_0^{H'} \rightarrow C_0^{H})$. This `classical' chain map lifts to a chain map $\gamma_1 \in \mathrm{Hom}_1(\mathcal{Q}', \mathcal{Q})$ between the ancilla code $\mathcal{Q}' = \textrm{HGP}(H', H')$ and the data code $\mathcal{Q}$. The induced logical action $\Gamma_Z(\gamma_1) \in \mathbb{F}_2^{k \times k'}$ realises a full-rank homomorphic CNOT. Furthermore, the physical unitary $U_{\gamma_1}$ consists entirely of a single layer of physical CNOTs between matched physical qubits of the two code blocks. 

While GPPMs combined with transversal/fold-transversal gates are powerful enough to facilitate the implementation of the entire Clifford group on the $O(k^2)$ logical qubits of an arbitrary HGP code, they are somewhat limiting in the sense that products of Paulis arranged in a \textit{grid} are measurable. This results in certain measurements being difficult to perform. To that end, we can directly search for chain maps $\gamma_1$ implementing the desired logical action, without assuming any special structure on $\gamma_1$ beyond the chain map constraints themselves. As discussed previously, we can use logical $k=1$ ancilla to efficiently perform arbitrary PPMs. In order to maintain the ability to parallelise PPMs, and homogenise state preparation, it makes sense to utilise the high-rate data code $\mathcal{Q}$ as the ancilla block as well. 

\begin{figure}
    \centering
    \includegraphics[width=0.95\linewidth]{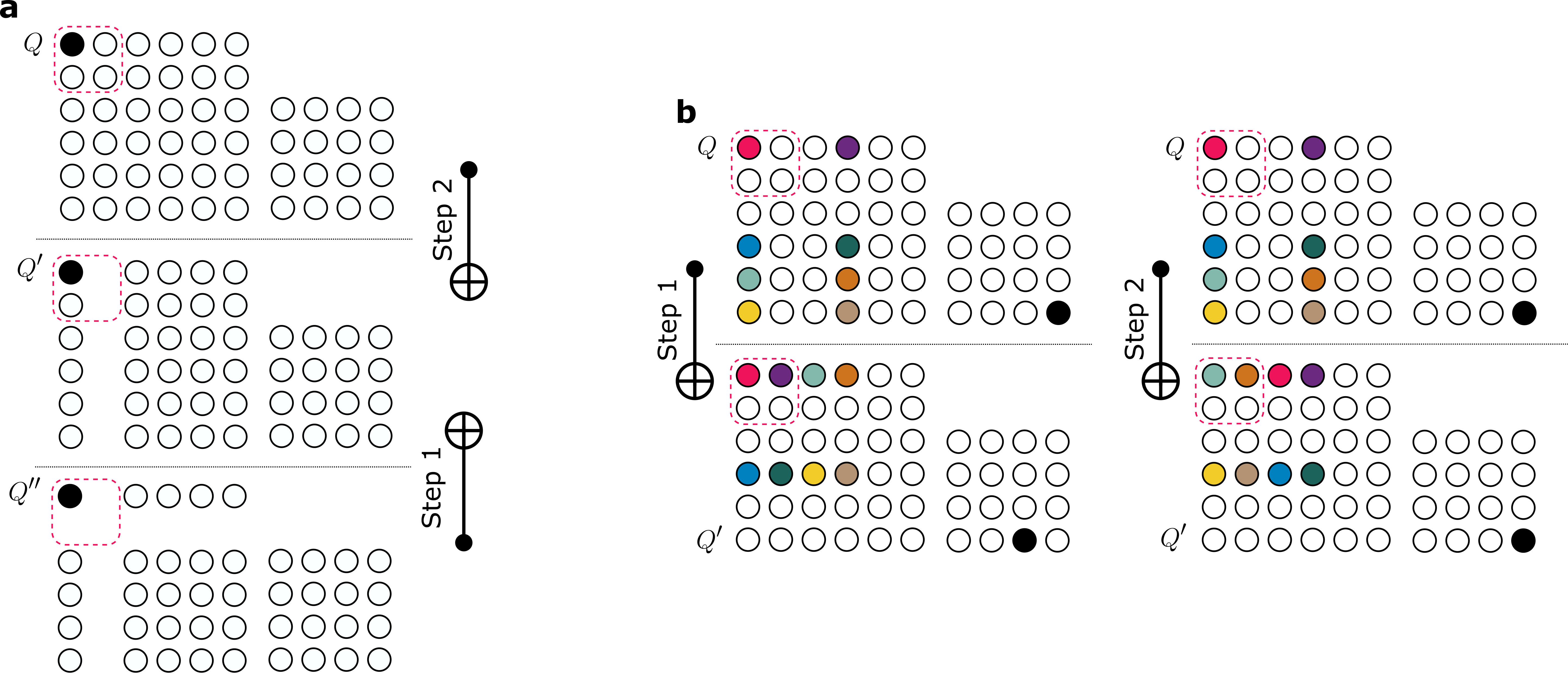}
    \caption{\small \textbf{Comparison of GPPMs and bespoke chain maps. a.} The procedure to measure the top-left (0,0) logical qubit using a GPPM. Two ancillary code blocks, $\mathcal{Q}'$ and $\mathcal{Q}''$ are required. \textbf{b.} Measuring the (0,0) logical qubit using a bespoke chain map. Only one ancilla code block is required, and the depth of the procedure is still two. Qubits highlighted with the same colour indicate that a physical CNOT gate is applied between them in that timestep.}
    \label{fig:gppms}
\end{figure}

As a concrete example, we consider a $[[52,4,4]]$ HGP code $\mathcal{Q}$ constructed from one-generator systematic-circulant (OGSC) base classical codes, see Appendix A of Ref.~\cite{xu2025fast}. Under the GPPM framework, measuring rows or columns of logical qubits in the $2 \times 2$ grid requires preparing a single ancilla block $\mathcal{Q}'$ with the appropriate rows or columns punctured. Measuring a single logical $Z$ operator on, for example, the top-left $(0,0)$ qubit in Fig.~\ref{fig:gppms}(a), requires a secondary ancilla $\mathcal{Q}''$ with additional rows and columns punctured. Instead, we can search for the targeted inter-block logical CNOT gate that would allow us to directly measure this logical operator using a single ancilla code block. This corresponds to imposing the affine constraint that only $\Gamma_Z(\gamma_1)_{00} = 1$, hence restricting to the subset $S_A = S_B = \{0\}$. We find the depth-2 chain map depicted in Fig.~\ref{fig:gppms}(b) which implements the desired logical action. In a given timestep, nine physical CNOT gates are applied to qubits highlighted with the same colour. We verify that $\gamma_1$ is fault-tolerant to distance-4 using \texttt{Stim}~\cite{Gidney_2021}. Compared to the GPPM implementation, we have removed the need for the second ancillary block $\mathcal{Q}''$, while maintaining the fault-tolerance and depth. We can also construct chain maps for measurements that are products of Pauli measurements, such as $\overline{Z}_{0,0} \overline{Z}_{0,1}\overline{Z}_{1,0}\overline{Z}_{1,1}$. Implementing this with logical CNOTs corresponds to a rank-1 logical FANIN into $\mathcal{Q}'$ from $\mathcal{Q}$. Concretely, we target the subset mapping $S_A = \{0,1,2,3\}, S_B=\{0\}$, and find a depth-3 chain map, $\gamma_1$, using 45 physical CNOT gates that is fault-tolerant to distance-4. 

Note that using bespoke chain maps is less efficient than GPPMs for certain measurements; for example, measuring the $\overline{Z}^{\otimes k}$ operator of a single column or row with a GPPM requires only $\mathcal{Q}'$ and a depth-1 circuit. Searching for the corresponding $\Gamma_Z(\gamma_1)$'s yielded maps with at least depth-2. When restricting to using $\mathcal{Q}' = \mathcal{Q}$ as an ancilla, the asymptotic overheads for the two schemes are identical, as $O(n)$ physical qubits are required for $\mathcal{Q}'$, and $O(d)$ time is required to fault-tolerantly prepare it. Nonetheless, removing the need for $\mathcal{Q}''$ while maintaining low-depth may have significant benefits in practice. As previously discussed, it may also be beneficial to use a smaller ancilla code, particularly for measurements of high-weight Pauli products. 

\subsection{Operations on concatenated codes}

Code concatenation~\cite{gottesman1997} is a procedure which takes two or more input stabilizer codes and outputs a larger stabilizer code. An instructive way to interpret the resulting code is as follows: consider two CSS codes $A$ and $B$ with parameters $[[n_a,1,d_a]]$ and $[[n_b,1,d_b]]$ (generalising code concatenation in cases when $A$ or $B$ has multiple logical qubits is straightforward, but the resulting distance of $A \circ B$ depends heavily on its exact structure), respectively, with which we will construct $A \circ B$, an $[[n_an_b,1,d_ad_b]]$ concatenated code. First, take $n_b$ disjoint instances of code $A$, $A^{\otimes n_b}$, which altogether can be viewed as a $[[n_an_b,n_b,d_a]]$ code. We then add the $n_b-1$ stabilizer generators coming from $B$ which reduce the size of the overall logical subspace to a single logical qubit. These additional stabilizer generators are unique as they act on the \textit{logical} qubits of $A^{\otimes n_b}$, i.e., from a stabilizer generator $P_iP_j\dots P_l$ of $B$, $\overline{P}_i \overline{P}_j\dots \overline{P}_l$, where $\overline{P}_i$ is a logical operator for the $i$-th $A$ code block, is included as a stabilizer generator of $A \circ B$. With this framing, measurement of these generators coming from $B$ can be interpreted as performing a Pauli product measurement on $A^{\otimes n_b}$. As such, we can perform syndrome extraction by constructing a chain map implementing a FANIN operation between $A^{\otimes n_b}$ and an ancilla code $\mathcal{Q}'$ with the appropriate distance. Multiple concatenated stabilizer generators could be measured in this way if we use an ancilla code $\mathcal{Q}'$ with more than one logical qubit. 

Logical operations on concatenated codes can also be interpreted hierarchically, where logical operations on the lower level code blocks enact logical operations on the overall code. As a concrete example, let us consider a two-level concatenated code in which the physical qubits of a clustered-cyclic (CC) code~\cite{gu2026qgpu} are encoded into a $[[n,n-2,2]]$ iceberg code~\cite{Self_2024}. CC codes have a clustered logical basis in which $\textrm{supp}(\overline{X}_i) ~\cap~ \textrm{supp}(\overline{X}_j) = \emptyset$ and $\textrm{supp}(\overline{X}_i)  = \textrm{supp}(\overline{Z}_i)$. Let us assign each iceberg codeblock to one of these clusters. Now consider performing a round-robin~\cite{Yoder_2016} $\overline{\textrm{CNOT}}_{i\rightarrow j}$ gate achieved by applying physical CNOT gates between the support of $\overline{X}_i$ and $\overline{Z}_j$ in a round-robin fashion. For a code with distance $d$, this operation would normally require $d$ rounds of physical CNOT gates to implement; however on the CC $\circ$ iceberg concatenated code, it can be done in constant time. In particular, the chain map implementing $\Gamma_Z(\gamma_1) = \mathbf{1}$, i.e. the all-ones matrix corresponding to a logical round-robin circuit, can be done in depth-2 for all iceberg codes. Note that the chain map is only fault-tolerant to distance two, and so additional machinery will be required to make it fully fault-tolerant on the concatenated code.

\section{Discussion}
\label{sec:discussion}

In this work, we introduced a linear-algebraic framework for synthesising inter-code logical Clifford couplings between CSS codes using chain maps. This relies on the observation that the physically relevant CNOT couplings between two code blocks $A$ and $B$ form a vector space, namely the degree-$1$ hom-space $\HomI{B}{A}$, and that specifying a logical action restricts this space to an affine subspace. This separates the problem into two conceptually distinct stages: first constructing the affine family of logically valid chain maps, and then searching within this family for representatives with favourable circuit properties. Since this construction depends only on the chain-complex data of the input codes, it is agnostic to geometry, locality, and code family, and therefore applies naturally to QEC architectures and primitives that exploit heterogeneous encodings.  

The examples in Sec.~\ref{sec:applications} show that this framework produces non-trivial low-depth solutions, including near-transversal behaviour between seemingly unrelated code pairs. These maps can realise logical CNOT networks of varying rank, logical FANOUT/IN primitives, and targeted subspace couplings. We described how such operations are relevant for a range of QEC applications, including code switching, magic state injection, Pauli product measurements, and operations on concatenated codes. Our numerical simulations further suggest that the resulting circuits can have performance comparable to transversal CNOT gates and idling memory. Beyond the examples considered here, the same framework can be used to design other primitives, such as GHZ-state preparation, bus-mediated phase rotation gadgets, and multi-block Clifford gadgets in which one code block mediates interactions between many others. For example, the latter can be used to distribute efficiently prepared multi-controlled non-Clifford resources~\cite{li2025} into heterogeneous and/or multiple code blocks via teleportation. 

The main computational bottleneck of our method is not the construction of the admissible chain-map space itself, but the subsequent optimisation over the resulting affine family. This optimisation is motivated by the transversal and near-transversal maps known in the literature~\cite{li2025, tan2025}. Transversal maps correspond to partial matchings, and are therefore straightforwardly schedulable once found, whereas depth minimisation over a general affine family is a discrete optimisation problem with no evident polynomial-time solution in general. In our implementation, this step is handled by CP-SAT. The size of the resulting search problem depends strongly on the input codes and on the logical target. The physical coupling matrix has $n_A n_B$ binary entries, and the admissible chain maps form a linear subspace characterised by the chain-map constraints. Imposing a desired logical action further restricts this to an affine subspace. Although these linear-algebraic steps are efficient compared with exhaustive search, the residual affine space can still have large dimension, so exhaustive brute-force enumeration is generally impractical. Consequently, brute-force search over the admissible family scales at worst exponentially in $n_A n_B$. Logical targeting removes at most $O(k_Ak_B)$ dimensions, and therefore does not substantially change this scaling under the code-rate assumptions considered here. The CP-SAT formulation provides a structured optimisation over this exponentially large affine space, but it also limits the scale of the instances accessible in our investigation. While the examples presented here should therefore be viewed as proof-of-principle demonstrations that the chain-map formulation can discover non-trivial logical interfaces, we highlight that we were able to discover circuits relevant to current moderate-size architecture proposals~\cite{Bravyi_2024, li2025, liu2025, webster2026}. 

Given this complexity, we introduced simple heuristics that empirically improve solver behaviour. These include promoting sparsity in the search-space basis, as described in Sec.~\ref{ssec:sparse-basis}, as well as a decongestion heuristic in place of direct depth-minimisation, which we generally found to reduce the time to first solution under a given maximum depth constraint. Given a current depth cap $D$, this heuristic asks how close a representative is to satisfying the stronger cap $D-1$, and penalises both the number of qubits whose row or column degree exceeds $D-1$, and the total excess above this threshold. If the overflow is reduced to zero, the map $\gamma_1$, and hence the physical CNOT circuit induced by $\gamma_1$, has been decongested to depth at most $D-1$, and the procedure is repeated. This replaces the coarse depth-minimisation objective by a smoother repair problem, redistributing CNOTs away from the overloaded qubits while remaining inside the affine family of logically valid maps.

Nevertheless, circuit depth is a coarse proxy for fault-tolerance, as it bounds the spread of errors in time, but does not by itself capture hook-error mechanisms and correlated fault paths. Future implementations should therefore incorporate QEC-aware objectives directly into the optimisation. Possible avenues include penalties for concentrating CNOTs on representatives of the same logical class, imposing constraints derived from code structure, and exploiting symmetry reductions from code automorphisms. It would also be interesting to investigate whether valid solutions between small instances of two code families can be lifted to ans\"atze for the larger instances. More broadly, previous solver outputs could be used to learn promising masks, basis changes, branching priorities, or other structural features that make the affine search space easier to navigate. 

The formalism developed here is also not limited to the specific family-builder and logical restrictor routines used in this work. Indeed, the framework provides a reusable algebraic interface for designing logical gadgets beyond the particular instances studied here, including diagonal Clifford gates, as shown in Appendix~\ref{asec:cz-extension}, larger Clifford primitives, as well as non-Clifford logic. We leave such generalisations for future work.  
 
Finally, we emphasise that the affine space realising a desired logical action already contains many partially fault-tolerant candidates. Hence, any map in this space provides a circuit which can in principle be promoted to a fault-tolerant inter-code coupling using  flags~\cite{flag-cnot-scheme}. We leave the spacetime tradeoff analysis of such schemes in comparison with generalised surgery gadgets~\cite{swaroop2026}, as well as applications to surgery schemes, for future work.

\section{Acknowledgements}

We thank \cb{S}erban Cercelescu for discussions on flagging Clifford circuits. We also thank Jonas Anderson and Mateusz Kupper for reviewing our manuscript. 

\printbibliography

\appendix

\section{Supplementary material}

\subsection{Degree-1 chain-map dimension}
\label{assec:gamma1-dim}

\begin{proposition}
    Choose decompositions 
    \begin{equation}
    C_1^A = B_1^A \oplus L^A \oplus T^A,
    \qquad
    C_1^B = B_1^B \oplus L^B \oplus T^B,
    \end{equation}
    with $B_1 = \mathrm{im}\partial_2$ and $Z_1 = \ker \partial_1 = B_1 \oplus L$. Then, $\gamma_1\in\operatorname{Hom}_1(B, A)$ if and only if, in the corresponding block coordinates, 
    \begin{equation}
        \gamma_1 =
        \begin{pmatrix}
             \gamma_{1,BB} & \gamma_{1,BL} & \gamma_{1,BT} \\
             0 & \gamma_{1,LL} & \gamma_{1,LT} \\
             0 & 0 & \gamma_{1,TT} \\
        \end{pmatrix},
        \label{eq:gamma1_block_2}
    \end{equation}
    where rows are indexed by $B_1^A, L^A, T^A$ and columns by $B_1^B, L^B$ and $T^B$. Therefore,
    \begin{equation}
        \dim \operatorname{Hom}_1(B,A) = r_Z^Ar_Z^B + k_B(r_Z^A + k_A) + r_X^Bn_A,
    \end{equation}
    where $r_Z^Q:=\dim\operatorname{im}\partial_2^Q$ and $r_X^Q:=\operatorname{rank}\partial_1^Q$ denote check ranks.
\end{proposition}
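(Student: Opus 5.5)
The plan is to reduce everything to Proposition~\ref{prop:hom1}. It says that $\gamma_1\in\operatorname{Hom}_1(B,A)$ if and only if
\[
\gamma_1(B_1^B)\subseteq B_1^A
\quad\text{and}\quad
\gamma_1(Z_1^B)\subseteq Z_1^A,
\qquad B_1=\im\partial_2,\ Z_1=\ker\partial_1.
\]
The chosen decompositions are adapted to the filtration $B_1\subseteq Z_1\subseteq C_1$. So these two inclusions should translate directly into vanishing conditions on blocks of $\gamma_1$.

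First I will fix the decompositions. Over $\mathbb F_2$ any subspace has a complement, so we may choose $L$ with $Z_1=B_1\oplus L$ and then $T$ with $C_1=Z_1\oplus T$, for both codes. I then write $\gamma_1$ in bases adapted to these direct sums: columns are indexed by $B_1^B,L^B,T^B$ and rows by $B_1^A,L^A,T^A$. The block $\gamma_{1,XY}$ is the $X$-component of $\gamma_1$ restricted to $Y$.

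Next I translate each inclusion into block conditions.
\begin{itemize}
\item The inclusion $\gamma_1(B_1^B)\subseteq B_1^A$ says the first block column has no $L^A$ or $T^A$ component. This gives $\gamma_{1,LB}=\gamma_{1,TB}=0$.
\item Since $Z_1^B=B_1^B\oplus L^B$, and given the first condition, the inclusion $\gamma_1(Z_1^B)\subseteq Z_1^A=B_1^A\oplus L^A$ reduces by linearity to $\gamma_1(L^B)\subseteq B_1^A\oplus L^A$. This gives $\gamma_{1,TL}=0$.
\item The block column $T^B$ is unconstrained.
\end{itemize}
Conversely, any matrix of the stated block upper-triangular form satisfies both inclusions. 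Checking this on the generating blocks suffices by linearity, and Proposition~\ref{prop:hom1} then gives membership in $\operatorname{Hom}_1(B,A)$. This proves the characterisation.

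For the dimension, $\operatorname{Hom}_1(B,A)$ is isomorphic to the space of such block matrices, which is the direct sum of its free blocks. So its dimension is the sum of the block sizes. I will identify those sizes:
\begin{itemize}
\item $\dim B_1=r_Z$, the rank of $\partial_2$.
\item $\dim L=\dim H_1=k$.
\item By rank--nullity, $\dim Z_1=n-r_X$, so $\dim T=n-\dim Z_1=r_X$.
\end{itemize}
Counting the free entries column block by column block:
\begin{itemize}
\item the $B_1^B$ column block contributes $r_Z^A r_Z^B$;
\item the $L^B$ column block contributes $k_B(r_Z^A+k_A)$;
\item the $T^B$ column block contributes $r_X^B\,n_A$, since all of its rows are free.
\end{itemize}
Summing these gives the claimed formula.

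I expect no step to be a real obstacle. The only points needing care are, first, to use $r_Z$ and $r_X$ as ranks rather than numbers of check rows, since parity-check matrices may be redundant. Second, the block form depends on the chosen complements while the dimension does not: different choices of $L$ and $T$ give conjugate descriptions of the same subspace.
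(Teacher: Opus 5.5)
Your proposal is correct and follows essentially the same route as the paper's proof. Both reduce to Proposition~\ref{prop:hom1}, read the two inclusions off as the vanishing of the $\gamma_{1,LB}$, $\gamma_{1,TB}$ and $\gamma_{1,TL}$ blocks, and count the free entries using $\dim B_1=r_Z$, $\dim L=k$ and $\dim T=r_X$; your extra remarks on the existence of complements and on reading $r_X, r_Z$ as ranks rather than row counts only make explicit what the paper leaves implicit.
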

\begin{proof}
    By Proposition~\ref{prop:hom1}, 
    \begin{equation}
    \gamma_1(B_1^B)\subseteq B_1^A,
    \qquad
    \gamma_1(B_1^B\oplus L^B)\subseteq B_1^A\oplus L^A.
    \end{equation}
    The first inclusion implies that the components $\gamma_{1,LB}$ and $\gamma_{1,TB}$ vanish, while the second inclusion implies that $\gamma_{1,TL}$ vanishes. Conversely, if these three blocks vanish, we recover exactly the condition in Proposition~\ref{prop:hom1}. Using $\dim B_1^A = r_Z^A$, $\dim L^A = k_A$, $\dim T^A = r_X^A$, $n_A = r_Z^A + k_A + r_X^A$, and similarly for code $B$, the dimension formula follows. 
\end{proof}

\subsection{Multi-block logical action targeting}
\label{assec:multi-block-targeting}

The logical-action formalism presented in Secs.~\ref{sec:logical-action} and~\ref{sec:logical-targeting} extends directly to collections of CSS code blocks. Instead of coupling a single pair of codes, we consider a set of codes $\{A_\ell\}_{\ell=1}^L$, $\{B_m\}_{m=1}^M$, and form their degree-wise direct sums  
\begin{equation}
A := \bigoplus_{l=1}^L A_l,
\qquad
B := \bigoplus_{m=1}^M B_m.
\end{equation}
At the chain-complex level, this is described by $C_i^A=\bigoplus_{\ell=1}^L C_i^{A_\ell}$ and $
C_i^B=\bigoplus_{m=1}^M C_i^{B_m}$,
with boundary operators
\begin{equation}
    \partial_i^A = \bigoplus_l \partial_i^{A_l}, \qquad \partial_i^B = \bigoplus_m \partial_i^{B_m}.
\end{equation}
Since $A$ and $B$ are again CSS chain complexes, all the constructions of Sec.~\ref{sec:build-hom-space}--\ref{sec:logical-targeting} apply without change. Let 
\[
n_A:=\sum_{\ell=1}^L n_{A_\ell},
\qquad
n_B:=\sum_{m=1}^M n_{B_m}.
\]
Then, using the direct-sum decompositions above, a degree-$1$ map $\gamma_1 \in \operatorname{Hom}_1(B, A)$ is represented by a matrix 
\begin{equation}
\gamma_1=
\begin{pmatrix}
\gamma_1^{(1,1)} & \cdots & \gamma_1^{(1,M)} \\
\vdots & \ddots & \vdots \\
\gamma_1^{(L,1)} & \cdots & \gamma_1^{(L,M)}
\end{pmatrix},
\qquad
\gamma_1^{(\ell,m)}\in \mathbb F_2^{n_{A_\ell}\times n_{B_m}}.
\label{eq:gamma1-block}
\end{equation}
Each block $\gamma_1^{(\ell,m)}$ specifies the CNOT pattern from block $A_l$ (control)
to block $B_m$ (target) in the physical unitary $U_{\gamma_1}$ as per Eq.~\eqref{eq:physical-cnot}. Let $k_{A_l}$ and $k_{B_m}$ denote the number of logical qubits in the respective subcodes
\[
k_A:=\sum_{\ell=1}^L k_{A_\ell},
\qquad
k_B:=\sum_{m=1}^M k_{B_m},
\]
and choose logical representative matrices compatible with the block decomposition, for example
\[
\boldsymbol L_Z^A=\operatorname{diag}(\boldsymbol L_Z^{A_1},\dots, \boldsymbol L_Z^{A_L}),
\qquad
\boldsymbol L_Z^B=\operatorname{diag}( \boldsymbol L_Z^{B_1},\dots, \boldsymbol L_Z^{B_M}),
\]
and similarly for $\boldsymbol L_X^A$ and $\boldsymbol  L_X^B$. Then the induced logical matrix $\Gamma_Z \in \mathbb{F}_2^{k_A\times k_B}$ also decomposes into blocks
\begin{equation}
\Gamma_Z(\gamma_1)=
\begin{pmatrix}
\Gamma_Z^{(1,1)} & \cdots & \Gamma_Z^{(1,M)} \\
\vdots & \ddots & \vdots \\
\Gamma_Z^{(L,1)} & \cdots & \Gamma_Z^{(L,M)}
\end{pmatrix},
\qquad
\Gamma_Z^{(\ell,m)}\in \mathbb F_2^{k_{A_\ell}\times k_{B_m}}.
\label{eq:GammaZ-block}
\end{equation}
Each block $\Gamma_Z^{(\ell,m)}$ captures the induced propagation of logical $Z$-operators
from block $B_m$ into block $A_\ell$. The same discussion applies to $\Gamma_X(\gamma_1)$, and in symplectically normalised logical bases one has $\Gamma_X(\gamma_1)=\Gamma_Z(\gamma_1)^{\mathsf T}$.

The affine targeting of Sec.~\ref{sec:logical-targeting} extends straightforwardly to this setting. Indeed, after choosing block-compatible logical representatives, the logical action map remains linear. Consequently, any linear condition on the block-structured logical action defines an affine subspace of $\HomI{B}{A}$. This includes prescribing a target submatrix $\Gamma_Z^{(\ell,m)}$ on chosen block pairs, imposing spectator behaviour on selected logical blocks or logical subspaces, and combining several such conditions simultaneously. By contrast, structured conditions involving rank or blockwise CNOT/FANOUT/IN conditions in which the source or target logical subspaces are not fixed in advance, are nonlinear or combinatorial, and can be treated by the same strategies described in Sec.~\ref{ssec:nonlinear-targeting}.

\section{Logical CZ from chain maps}
\label{asec:cz-extension}

\subsection{Formalism}

As seen in the main text, chain maps are naturally matched to CNOT-type propagation, since they preserve CSS operators at the chain-complex level, a point which was also mentioned in Ref.~\cite{poirson2025engineering}. We now describe the analogous formulation for inter-code CZ couplings. The main difference is that a CZ gate is diagonal and symmetric, i.e. it fixes all $Z$ operators, while propagating $X$ operators on the opposite block. Let 
\begin{equation}
    A_\bullet:
    C_2^A \xrightarrow{\partial_2^A}
    C_1^A \xrightarrow{\partial_1^A}
    C_0^A,
    \qquad
    B_\bullet:
    C_2^B \xrightarrow{\partial_2^B}
    C_1^B \xrightarrow{\partial_1^B}
    C_0^B
\end{equation}
be the chain complexes associated with two CSS codes, with $\partial_2 = H_Z^{\mathsf T}$, and $\partial_1 = H_X$. A physical CZ edge between qubits $a$ of $A$ and $b$ of $B$ acts by 
\begin{equation}
    X_a \mapsto X_aZ_b, \qquad X_b \mapsto Z_aX_b,
\end{equation}
while fixing $Z$ operators. Thus, a collection of CZ edges between the two blocks is specified by a bilinear pairing
\begin{equation}
    \zeta_{AB} : C_1^A \times C_1^B \to \mathbb F_2.
\end{equation}
Using the fixed computational bases, we identify each qubit space with its dual, such that this pairing is represented by a matrix $\zeta_1 \in \mathbb F_2^{n_A \times n_B}$, equivalently a linear map $\zeta_1 : (C_1^B)^* \to C_1^A$, where $(\zeta_1)_{ab} = 1$ indicates that a CZ is applied from qubit $a$ to qubit $b$. Here we refer to the dual-reversed complex 
\begin{equation}
    B_\bullet^\vee :
    (C_0^B)^*
    \xrightarrow{(\partial_1^B)^{\mathsf T}}
    (C_1^B)^*
    \xrightarrow{(\partial_2^B)^{\mathsf T}}
    (C_2^B)^* ,
\end{equation}
with \(
B_2^\vee=(C_0^B)^*,
 B_1^\vee=(C_1^B)^*\), and \(B_0^\vee=(C_2^B)^* 
\), to define the space of diagonal chain maps $ \boldsymbol{\zeta}:B_\bullet^\vee\to A_\bullet$.

\begin{definition}
    The space of diagonal chain maps from $B$ to $A$ is 
    \begin{equation}
    \operatorname{Diag}_{\mathrm{Ch}}(B,A) := \operatorname{Hom}_{\mathrm{Ch}}(B_\bullet^\vee,A_\bullet).
    \end{equation}
\end{definition}
\hspace{-6mm}Thus, an element $\zeta_\bullet = (\zeta_2, \zeta_1, \zeta_0) \in \operatorname{Diag}_{\mathrm{Ch}}(B, A)$ consists of maps
\begin{equation}
    \zeta_2 : (C_0^B)^* \to C_2^A,
    \qquad
    \zeta_1 : (C_1^B)^* \to C_1^A,
    \qquad
    \zeta_0 : (C_2^B)^* \to C_0^A,
\end{equation}
satisfying
\begin{equation}
    \partial_2^A\zeta_2
    =
    \zeta_1(\partial_1^B)^{\mathsf T},
    \qquad
    \partial_1^A\zeta_1
    =
    \zeta_0(\partial_2^B)^{\mathsf T}.
    \label{eq:chain-map-eqns-dual}
\end{equation}
We are primarily interested in the middle component $\zeta_1$, since it
specifies the physical CZ pattern. We define the space of degree-$1$ maps that admit a completion to a diagonal chain map as 
\begin{equation}
    \operatorname{Diag}_1(B,A)
    :=
    \left\{
    \zeta_1\in\mathbb F_2^{n_A\times n_B}
    \;\middle|\;
    \exists\,\zeta_2,\zeta_0
    \text{ such that }
    (\zeta_2,\zeta_1,\zeta_0)
    \in
    \operatorname{Diag}_{\mathrm{Ch}}(B,A)
    \right\}.
    \label{eq:diag1-def}
\end{equation}
In terms of the parity-check matrices of codes $A$ and $B$,  Eqs.~\eqref{eq:chain-map-eqns-dual} become
\begin{equation}
    (H_Z^A)^{\mathsf T}\zeta_2 = \zeta_1(H_X^B)^{\mathsf T}, \qquad H_X^A\zeta_1 = \zeta_0 H_Z^B.
    \label{eq:chain-map-eqns-dual-explicit}
\end{equation}
These equations encode stabilizer-preservation conditions for the inter-code CZ pattern $\zeta_1$. An $X$-stabilizer of $B$ picks up a $Z$ operator on $A$, and the first equation requires this operator to be a $Z$ stabilizer of $A$. Similarly, an $X$-stabilizer of $A$ picks up a $Z$ operator on $B$, and the second equation requires this operator be a $Z$ stabilizer on $B$. Since CZ gates are diagonal, all $Z$ stabilizers are fixed. The following proposition gives the diagonal analogue of Proposition~\ref{prop:hom1}.

\begin{proposition}
A linear map $\zeta_1 : (C_1^B)^* \to C_1^A$ lies in
$\operatorname{Diag}_1(B,A)$ if and only if
\begin{align}
\zeta_1\!\left(\im(\partial_1^B)^{\mathsf T}\right)
&\subseteq \im\partial_2^A,
\label{eq:diag1-cond1}\\
\zeta_1\!\left(\ker(\partial_2^B)^{\mathsf T}\right)
&\subseteq \ker\partial_1^A.
\label{eq:diag1-cond2}
\end{align}
\label{prop:diag1-chain-map-eqs}
\end{proposition}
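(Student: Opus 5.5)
The plan is to reduce the statement directly to Proposition~\ref{prop:hom1}, applied to the pair of complexes $B_\bullet^\vee$ and $A_\bullet$. Since $\operatorname{Diag}_{\mathrm{Ch}}(B,A)=\operatorname{Hom}_{\mathrm{Ch}}(B_\bullet^\vee,A_\bullet)$ by definition, and $\operatorname{Diag}_1(B,A)$ is exactly the set of middle components admitting a completion, we have $\operatorname{Diag}_1(B,A)=\operatorname{Hom}_1(B^\vee,A)$ once we identify $B^\vee_1=(C_1^B)^*$. Proposition~\ref{prop:hom1} was proved for arbitrary length-3 complexes over $\mathbb F_2$; its proof used only $\partial_1\partial_2=0$ implicitly and linear algebra, never any CSS interpretation. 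So it applies verbatim to any length-3 complex.

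First, I check that $B_\bullet^\vee$ is a genuine chain complex. Its boundaries are $\partial_2^{B^\vee}=(\partial_1^B)^{\mathsf T}$ and $\partial_1^{B^\vee}=(\partial_2^B)^{\mathsf T}$. The composite $(\partial_2^B)^{\mathsf T}(\partial_1^B)^{\mathsf T}=(\partial_1^B\partial_2^B)^{\mathsf T}=0$, so the complex is valid.

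Next, I substitute these boundaries into conditions~\eqref{eq:hom1-cond1} and~\eqref{eq:hom1-cond2}. The first condition becomes $\zeta_1(\im(\partial_1^B)^{\mathsf T})\subseteq\im\partial_2^A$, which is~\eqref{eq:diag1-cond1}. The second becomes $\zeta_1(\ker(\partial_2^B)^{\mathsf T})\subseteq\ker\partial_1^A$, which is~\eqref{eq:diag1-cond2}. The chain-map equations of Proposition~\ref{prop:hom1}, with $\gamma_2\mapsto\zeta_2$ and $\gamma_0\mapsto\zeta_0$, are then exactly Eqs.~\eqref{eq:chain-map-eqns-dual}, so the equivalence transfers.

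For a self-contained writeup, I would instead repeat the two directions explicitly. In the forward direction, $u=(\partial_1^B)^{\mathsf T}w$ gives $\zeta_1u=\partial_2^A\zeta_2w$, and for $v\in\ker(\partial_2^B)^{\mathsf T}$ we get $\partial_1^A\zeta_1v=\zeta_0(\partial_2^B)^{\mathsf T}v=0$.

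In the converse direction, $\zeta_2$ exists because $\im(\zeta_1(\partial_1^B)^{\mathsf T})\subseteq\im\partial_2^A$: lift each basis image. For $\zeta_0$, set $\zeta_0((\partial_2^B)^{\mathsf T}x):=\partial_1^A\zeta_1x$. This is well defined by~\eqref{eq:diag1-cond2}, and it then extends linearly from $\im(\partial_2^B)^{\mathsf T}$ to all of $(C_2^B)^*$.

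The only point needing care, and the one I expect to be the main obstacle, is bookkeeping the dual identifications. I must make sure the domains line up: $\zeta_0:(C_2^B)^*\to C_0^A$ is defined on the image of $(\partial_2^B)^{\mathsf T}:(C_1^B)^*\to(C_2^B)^*$. Once the dual complex is written out, this is immediate.
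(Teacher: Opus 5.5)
Your proposal is correct and follows the paper's own route. The paper proves this by noting it is the direct dual analogue of Proposition~\ref{prop:hom1} applied to $B_\bullet^\vee$: the two inclusions are exactly the conditions for $\zeta_1(\partial_1^B)^{\mathsf T}$ to factor through $\partial_2^A$ and for $\partial_1^A\zeta_1$ to factor through $(\partial_2^B)^{\mathsf T}$, which is what your substitution and your explicit two-direction argument establish. As a minor aside, the proof of Proposition~\ref{prop:hom1} never uses $\partial_1\partial_2=0$, so your check that $B_\bullet^\vee$ is a complex only serves to make $\operatorname{Hom}_{\mathrm{Ch}}$ well posed; the argument itself does not rely on it.
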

Finally, we can define the logical CZ action. Given the Pauli propagation under CZ gates, the relevant logical data are the $X$ logical classes, represented by cohomology 
\begin{equation}
    H^1
:=
\ker(\partial_2^{\mathsf T})/
\operatorname{im}(\partial_1^{\mathsf T}).
\end{equation}
\begin{lemma}
    Any $\zeta_1 \in\operatorname{Diag}_1(B,A)$ induces a well-defined bilinear pairing 
    \begin{equation}
        \zeta_\star : H^1(A)\times H^1(B) \to \mathbb F_2,
        \qquad
        ([x_A],[x_B])
    \longmapsto
    x_A^{\mathsf T}\zeta_1x_B.
    \end{equation}
    \label{lemma:diag-logical-pairing}
\end{lemma}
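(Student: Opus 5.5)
The plan is to mirror the proof of Lemma~\ref{lemma:induced_op}: first show that the bilinear expression $x_A^{\mathsf T}\zeta_1 x_B$ is independent of the chosen cocycle representatives, then note that bilinearity is automatic. Fix representatives $x_A\in\ker(\partial_2^A)^{\mathsf T}$ and $x_B\in\ker(\partial_2^B)^{\mathsf T}$. Since the expression is bilinear in $(x_A,x_B)$ over $\mathbb F_2$, well-definedness reduces to two vanishing statements. The first is that $x_A^{\mathsf T}\zeta_1 x_B=0$ whenever $x_A\in\im(\partial_1^A)^{\mathsf T}$ and $x_B$ is a cocycle. The second is that $x_A^{\mathsf T}\zeta_1 x_B=0$ whenever $x_A$ is a cocycle and $x_B\in\im(\partial_1^B)^{\mathsf T}$. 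Once both hold, changing either representative by a coboundary leaves the value unchanged, and the induced map on $H^1(A)\times H^1(B)$ is bilinear because the formula is.

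\textbf{Changing the representative in $B$.} Write $x_B=(\partial_1^B)^{\mathsf T}w$. By condition~\eqref{eq:diag1-cond1} of Proposition~\ref{prop:diag1-chain-map-eqs}, $\zeta_1 x_B\in\im\partial_2^A$; equivalently, from Eq.~\eqref{eq:chain-map-eqns-dual}, $\zeta_1(\partial_1^B)^{\mathsf T}w=\partial_2^A\zeta_2 w$. Then
\[
x_A^{\mathsf T}\zeta_1x_B = x_A^{\mathsf T}\partial_2^A\zeta_2w = \bigl((\partial_2^A)^{\mathsf T}x_A\bigr)^{\mathsf T}\zeta_2 w = 0,
\]
since $x_A$ is a cocycle.

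\textbf{Changing the representative in $A$.} Write $x_A=(\partial_1^A)^{\mathsf T}u$. Then $x_A^{\mathsf T}\zeta_1x_B=u^{\mathsf T}\partial_1^A\zeta_1x_B$. By condition~\eqref{eq:diag1-cond2}, $x_B\in\ker(\partial_2^B)^{\mathsf T}$ implies $\zeta_1x_B\in\ker\partial_1^A$, so this expression vanishes. Alternatively, one can use $\partial_1^A\zeta_1=\zeta_0(\partial_2^B)^{\mathsf T}$ and $(\partial_2^B)^{\mathsf T}x_B=0$.

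There is no real obstacle. The only points needing care are the identifications of each $C_1$ with its dual under the fixed bases, and the transposition conventions. In particular, one must read $\zeta_1 x_B$ with $x_B\in(C_1^B)^*$, so that the pairing $x_A^{\mathsf T}(\cdot)$ on $C_1^A$ makes sense. Proposition~\ref{prop:diag1-chain-map-eqs} is stated without proof. If it cannot be relied upon, I would instead use the explicit completion $(\zeta_2,\zeta_0)$ from the definition~\eqref{eq:diag1-def} together with Eqs.~\eqref{eq:chain-map-eqns-dual}, which gives both vanishing statements directly, as indicated in the two steps above.
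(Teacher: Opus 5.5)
Your proof is correct and follows essentially the paper's own argument. The paper also shows that a coboundary change in $x_B$ shifts $\zeta_1 x_B$ by a boundary in $A$, which pairs trivially with the cocycle $x_A$, and that a coboundary change $x_A=(\partial_1^A)^{\mathsf T}u$ gives $u^{\mathsf T}\partial_1^A\zeta_1x_B=0$ because $\zeta_1x_B\in\ker\partial_1^A$; your fallback via the explicit completion $(\zeta_2,\zeta_0)$ is apt, since only the easy direction of Proposition~\ref{prop:diag1-chain-map-eqs} is needed here.
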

The proofs of Proposition~\ref{prop:diag1-chain-map-eqs} and
Lemma~\ref{lemma:diag-logical-pairing} are direct dual analogues of
Proposition~\ref{prop:hom1} and Lemma~\ref{lemma:induced_op}. In particular, the two inclusions in Proposition~\ref{prop:diag1-chain-map-eqs} are precisely the conditions that allow $\zeta_1(\partial_1^B)^{\mathsf T}$ to factor through $\partial_2^A$ and $\partial_1^A\zeta_1$ to factor through $(\partial_2^B)^{\mathsf T}$. For the pairing, changing $x_B$ by an element of
$\operatorname{im}(\partial_1^B)^{\mathsf T}$ changes $\zeta_1x_B$ by a
boundary in $A$, which pairs trivially with $x_A\in\ker(\partial_2^A)^{\mathsf T}$; changing $x_A$ by an element of
$\operatorname{im}(\partial_1^A)^{\mathsf T}$ gives $u^{\mathsf T}\partial_1^A\zeta_1x_B=0$ since $\zeta_1x_B\in\ker\partial_1^A$.

\subsection{Vectorised construction of \texorpdfstring{$\operatorname{Diag}_1(B,A)$}{Diag1(B,A)}}

As in Sec.~\ref{ssec:vectorisation}, Proposition~\ref{prop:diag1-chain-map-eqs} gives a direct kernel description of admissible degree-$1$ CZ couplings. Let $N_B^\vee$ be a matrix whose columns form a basis of $\ker(\partial_2^B)^{\mathsf T}$, and let $U_A$ be a matrix whose columns form a basis of $\ker\bigl((\partial_2^A)^{\mathsf T}\bigr)$. Then, the conditions in Eqs.~\eqref{eq:diag1-cond1} and~\eqref{eq:diag1-cond2} are equivalent to 
\begin{align}
\partial_1^A \zeta_1 N_B^\vee &= 0,
\label{eq:diag1-condition-kernel}\\
U_A^{\mathsf T}\zeta_1(\partial_1^B)^{\mathsf T} &= 0.
\label{eq:diag1-condition-image}
\end{align}
Using the same column-stacking convention and vectorisation identity as in the main text, Eqs.~\eqref{eq:diag1-condition-kernel} and
\eqref{eq:diag1-condition-image} give rise to the linear system $P_{\mathrm{CZ}}\operatorname{vec}(\zeta_1)=0$, where
\begin{equation}
    P_{\mathrm{CZ}}
    :=
    \begin{pmatrix}
    (N_B^\vee)^{\mathsf T}\otimes \partial_1^A \\
    \partial_1^B\otimes U_A^{\mathsf T}
    \end{pmatrix}.
    \label{eq:diag1-linear-system}
\end{equation}
The following analogue ensues.
\begin{corollary}
Under fixed bases and the vectorisation convention, the restricted vectorisation map
\begin{equation}
\operatorname{vec}\big|_{\operatorname{Diag}_1(B,A)}
:
\operatorname{Diag}_1(B,A)
\xrightarrow{\cong}
\ker P_{\mathrm{CZ}}
\end{equation}
is a linear isomorphism.
\end{corollary}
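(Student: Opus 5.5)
The claim is that $\operatorname{vec}$ restricted to $\operatorname{Diag}_1(B,A)$ is a linear isomorphism onto $\ker P_{\mathrm{CZ}}$. The plan mirrors the corresponding corollary for $\operatorname{Hom}_1(B,A)$. Since $\operatorname{vec}:\mathbb F_2^{n_A\times n_B}\to\mathbb F_2^{n_An_B}$ is already a linear bijection, injectivity and linearity of the restriction are automatic. It therefore suffices to prove the set equality $\operatorname{vec}(\operatorname{Diag}_1(B,A))=\ker P_{\mathrm{CZ}}$, equivalently that for each $\zeta_1$,
\[
\zeta_1\in\operatorname{Diag}_1(B,A)\iff P_{\mathrm{CZ}}\operatorname{vec}(\zeta_1)=0 .
\]
Surjectivity onto $\ker P_{\mathrm{CZ}}$ then follows by reshaping any kernel vector with $\operatorname{vec}^{-1}$.

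\textbf{Step 1.} Invoke Proposition~\ref{prop:diag1-chain-map-eqs}, which characterises membership in $\operatorname{Diag}_1(B,A)$ by the two inclusions~\eqref{eq:diag1-cond1} and~\eqref{eq:diag1-cond2}. If one wants this self-contained, the proof copies Proposition~\ref{prop:hom1} verbatim with $\partial_2^B$ replaced by $(\partial_1^B)^{\mathsf T}$ and $\partial_1^B$ replaced by $(\partial_2^B)^{\mathsf T}$.
\begin{itemize}
\item $\zeta_2$ is obtained by factoring $\zeta_1(\partial_1^B)^{\mathsf T}$ through $\partial_2^A$, which is possible because its image lies in $\im\partial_2^A$.
\item $\zeta_0$ is defined on $\im(\partial_2^B)^{\mathsf T}$ by $\zeta_0\bigl((\partial_2^B)^{\mathsf T}x\bigr):=\partial_1^A\zeta_1x$. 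This is well defined by~\eqref{eq:diag1-cond2}, and one then extends it arbitrarily to all of $(C_2^B)^*$.
\end{itemize}

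\textbf{Step 2.} Translate each inclusion into a matrix equation.
\begin{itemize}
\item Condition~\eqref{eq:diag1-cond2} says that $\partial_1^A$ kills $\zeta_1 v$ for every $v$ in the column span of $N_B^\vee$. This is exactly~\eqref{eq:diag1-condition-kernel}.
\item Condition~\eqref{eq:diag1-cond1} says that each column of $\zeta_1(\partial_1^B)^{\mathsf T}$ lies in $\im\partial_2^A$. Over $\mathbb F_2$ we have $\im\partial_2^A=\bigl(\ker(\partial_2^A)^{\mathsf T}\bigr)^{\perp}$, by finite-dimensional orthogonal-complement duality ($\im M=(\ker M^{\mathsf T})^\perp$ holds over any field). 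So membership is equivalent to orthogonality with every column of $U_A$, i.e.\ $U_A^{\mathsf T}\zeta_1(\partial_1^B)^{\mathsf T}=0$. This is~\eqref{eq:diag1-condition-image}.
\end{itemize}

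\textbf{Step 3.} Apply the identity~\eqref{eq:vec-identity} to each equation.
\begin{itemize}
\item Taking $S=\partial_1^A$ and $Q=N_B^\vee$ gives the block $(N_B^\vee)^{\mathsf T}\otimes\partial_1^A$.
\item Taking $S=U_A^{\mathsf T}$ and $Q=(\partial_1^B)^{\mathsf T}$ gives $Q^{\mathsf T}\otimes S=\partial_1^B\otimes U_A^{\mathsf T}$.
\end{itemize}
Stacking the two blocks reproduces $P_{\mathrm{CZ}}$ in~\eqref{eq:diag1-linear-system}. This establishes the required equivalence, and hence the bijection onto $\ker P_{\mathrm{CZ}}$.

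The only step with any substance is the image condition in Step 2. There one must use $\im\partial_2^A=(\ker(\partial_2^A)^{\mathsf T})^\perp$ rather than a kernel test, and check that no characteristic-2 pathology arises. None does, because the standard dot product is nondegenerate on $\mathbb F_2^{n_A}$, so $\dim W^\perp=n_A-\dim W$ and the rank identity give the equality. The remaining work is bookkeeping of the Kronecker ordering under column-stacking.
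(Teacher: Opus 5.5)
Your proposal is correct and follows the same route as the paper. Proposition~\ref{prop:diag1-chain-map-eqs} reduces membership in $\operatorname{Diag}_1(B,A)$ to the two inclusions, which are rewritten as $\partial_1^A\zeta_1N_B^\vee=0$ and $U_A^{\mathsf T}\zeta_1(\partial_1^B)^{\mathsf T}=0$; the vectorisation identity then stacks these into $P_{\mathrm{CZ}}$, so that $\operatorname{vec}$, being a bijection on all matrices, restricts to an isomorphism onto $\ker P_{\mathrm{CZ}}$. The paper leaves the step $\im\partial_2^A=\bigl(\ker(\partial_2^A)^{\mathsf T}\bigr)^{\perp}$ implicit, and your justification of it via nondegeneracy of the dot product over $\mathbb F_2$ is the right one.
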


\subsection{Induced logical CZ action}

A solution $\zeta_1$ specifies a physical CZ circuit. For every nonzero entry $(\zeta_1)_{ab} = 1$, we apply a CZ gate between qubit $a$ of code $A$ and $b$ of code $B$. Since all CZ gates commute, the circuit is again represented by a bipartite interaction graph, with depth and weight defined as in Sec.~\ref{sec:cp-sat}.

By Lemma~\ref{lemma:diag-logical-pairing}, $\zeta_1$ induces a well-defined bilinear pairing $\zeta_\star$. Let us choose matrices $\boldsymbol L_X^A\in\mathbb F_2^{k_A\times n_A}$ and $\boldsymbol L_X^B\in\mathbb F_2^{k_B\times n_B}$ whose rows are representatives of bases of $H^1(A)$ and $H^1(B)$, respectively. The matrix representation of the induced logical CZ pairing in these bases is 
\begin{equation}
    \Gamma_{AB}(\zeta_1) := 
    \boldsymbol L_X^A\zeta_1(\boldsymbol L_X^B)^{\mathsf T}
\in\mathbb F_2^{k_A\times k_B}.
\label{eq:induced_diag_Gamma_AB}
\end{equation}
An entry in $(\Gamma_{AB})_{ij}$ indicates whether the physical CZ pattern induces a logical CZ coupling between the $i^{\mathrm{th}}$ logical qubit of code $A$ and the $j^{\mathrm{th}}$ logical qubit of code $B$. Equivalently, the CZ circuit maps the Pauli logical operators following 
\begin{equation}
    \overline X_i^A
    \mapsto
    \left(\overline X_i^A
    \prod_{j=1}^{k_B}
    (\overline Z_j^B)^{(\Gamma_{AB})_{ij}} \right),
    \qquad
    \overline X_j^B
    \mapsto
    \left(
    \prod_{i=1}^{k_A}
    (\overline Z_i^A)^{(\Gamma_{AB})_{ij}}
    \right)
    \overline X_j^B,
    \label{eq:cz-logical-pauli-action}
\end{equation}
up to stabilizers, while all logical $Z$ operators are fixed.

Since $\Gamma_{AB}$ depends linearly on $\zeta_1$, any linear constraint on the induced logical CZ action defines an affine restriction of $\operatorname{Diag}_1(B,A)$. Let 
\begin{equation}
\mathcal T_{{\mathrm{CZ}}}:
\mathbb F_2^{k_A\times k_B}\to \mathbb F_2^t
\end{equation}
be a linear map encoding the desired logical constraints, and let $b \in \mathbb F_2^t$. We define the corresponding logically targeted
diagonal family by
\begin{equation}
\operatorname{Diag}_1(B,A)\big|_{(\mathcal T_{\mathrm{CZ}},b)}
:=
\left\{
\zeta_1\in\operatorname{Diag}_1(B,A)
\;\middle|\;
\mathcal T_{\mathrm{CZ}}\!\left(\Gamma_{AB}(\zeta_1)\right)=b
\right\}.
\label{eq:targeted-diag-family-general}
\end{equation}
Since $\mathcal T_{\mathrm{CZ}}\circ\Gamma_{AB}$ is linear, this restricted solution space is
either empty or an affine subspace of $\operatorname{Diag}_1(B,A)$. In the implementation used here, we take the special case in which
$\mathcal T_{\mathrm{CZ}}$ is the vectorisation map and
$b=\operatorname{vec}(\Gamma_{AB}^{\mathrm{tgt}})$. When the affine system is feasible, its solution may be written as 
\begin{equation}
\zeta_1(y)
=
\zeta_1^{(0)}
+
\sum_{m=1}^r y_m\widetilde\Delta_m,
\qquad
y_m\in\mathbb F_2,
\label{eq:affine-cz-fam}
\end{equation}
where $\zeta_1^{(0)}$ is one solution satisfying the targeted logical constraints, and the matrices $\widetilde\Delta_m$ span the homogeneous kernel of the combined stabilizer-preservation and logical-targeting constraints.

As in the case for homomorphic CNOT solutions, the affine family in Eq.~\eqref{eq:affine-cz-fam} can be passed to the CP-SAT optimisation layer. The binary variables $y_m$ select an element of the logically targeted affine space, while auxiliary Boolean variables encode the entries of $\zeta_1(y)$. One may then minimise a chosen objective such as depth, weight, or impose constraints and search for a feasible solution.

\section{Algorithms}

We summarise the algorithmic components underlying the construction of a physical CNOT circuit implementing a prescribed
logical action between CSS codes $A$ and $B$. The procedure consists of four stages. Firstly, a basis of the degree-$1$ hom-space $\HomI{B}{A}$ is constructed from the kernel description derived in
Sec.~\ref{sec:build-hom-space} (Algorithm~\ref{alg:hom1-family}). Secondly, the induced logical action of a candidate coupling is extracted in terms of the matrices
$\Gamma_Z$ and $\Gamma_X$ (Algorithm~\ref{alg:extract-gammaxz}). Third, logical targeting is imposed by
restricting the hom-space to an affine family satisfying the desired logical constraints (Algorithm~\ref{alg:restrict-by-gammaxz}). Finally, optional sparse basis replacement (Algorithm~\ref{alg:sparse-basis}) and CP-SAT optimisation are used to search this family for shallow and
sparse implementations. We note that Algorithm~\ref{alg:extract-gammaxz} is written for symplectically normalised logical representatives satisfying $L_X^A(L_Z^A)^{\mathsf T}=I$ and $L_X^B(L_Z^B)^{\mathsf T}=I$, such that the coefficients of the induced logical action can be
extracted directly from logical commutation pairings, and
Proposition~\ref{prop:Gamma-compat} reduces to $\Gamma_X =\Gamma_Z^{\mathsf T}$. For general logical bases, the same logical action may instead be
computed from Eqs.~\eqref{eq:GammaZ-def}
and~\eqref{eq:GammaX-def}, or equivalently by using the pairing
matrices $\Omega_A$ and $\Omega_B$ from
Proposition~\ref{prop:Gamma-compat}. \\

\begin{algorithm}[H]
\caption{Construction of $\HomI{B}{A}$}
\label{alg:hom1-family}

\KwIn{
CSS chain complexes
\[
A_\bullet: C_2^A \xrightarrow{\partial_2^A} C_1^A \xrightarrow{\partial_1^A} C_0^A,
\qquad
B_\bullet: C_2^B \xrightarrow{\partial_2^B} C_1^B \xrightarrow{\partial_1^B} C_0^B
\]
}
\KwOut{
A basis $\{\Delta_i\}_{i=1}^q$ of $\HomI{B}{A}$, so that every admissible middle map has a unique expansion
\[
\gamma_1=\sum_{i=1}^q x_i\Delta_i,\qquad x_i\in\mathbb F_2.
\]
}

$n_A \leftarrow \dim C_1^A$, \quad $n_B \leftarrow \dim C_1^B$\;

Compute a matrix $N_B$ whose columns form a basis of $\ker\partial_1^B$\;

Compute a matrix $U_A$ whose columns form a basis of $\ker(\partial_2^A)^{\mathsf T}$\;

Construct the constraint blocks
\[
P_1 \leftarrow N_B^{\mathsf T}\otimes \partial_1^A,
\qquad
P_2 \leftarrow (\partial_2^B)^{\mathsf T}\otimes U_A^{\mathsf T}
\]
corresponding respectively to the conditions
\[
\partial_1^A\gamma_1N_B=0,
\qquad
U_A^{\mathsf T}\gamma_1\partial_2^B=0.
\]

Set
\[
P \leftarrow
\begin{pmatrix}
P_1\\
P_2
\end{pmatrix}.
\]

Compute a basis $\{\delta_1,\dots,\delta_q\}$ of $\ker P$\;

\For{$i=1,\dots,q$}{
Set
\[
\Delta_i \leftarrow \operatorname{vec}^{-1}(\delta_i)
\in \mathbb F_2^{n_A\times n_B}
\]
using the fixed column-stacking convention\;
}

\Return{$\{\Delta_i\}_{i=1}^q$}
\end{algorithm}

\begin{algorithm}[H]
\caption{Extraction of the induced logical action $(\Gamma_X, \Gamma_Z)$}
\label{alg:extract-gammaxz}
\KwIn{
A coupling matrix $\gamma_1\in\mathbb{F}_2^{n_A\times n_B}$, $H_X^A, H_Z^A, H_X^B, H_Z^B$, and a symplectic logical representative basis $\boldsymbol L_X^A,\boldsymbol L_Z^A,\boldsymbol L_X^B,\boldsymbol L_Z^B$ 
}
\KwOut{$\Gamma_Z(\gamma_1)\in\mathbb F_2^{k_A\times k_B}$, $\Gamma_X(\gamma_1)\in\mathbb F_2^{k_B\times k_A}$
}
Initialise $\Gamma_Z\leftarrow 0_{k_A\times k_B}$ and $\Gamma_X\leftarrow 0_{k_B\times k_A}$\;
\For{$j=1, \dots, k_B$}
{
Let $z_B^{(j)}$ be the $j^{\mathrm{th}}$ row of $\boldsymbol L_Z^B$\;
Compute 
\[
z_A \leftarrow \gamma_1 (z_B^{(j)})^{\mathsf T}
\]
Verify that $z_A\in \ker H_X^A$\;
Set
\[
\Gamma_Z(\gamma_1)e_j \leftarrow \boldsymbol L_X^A z_A.
\]
}
\For{$i=1,\dots,k_A$}{
Let $x_A^{(i)}$ be the $i^{\mathrm{th}}$ row of $\boldsymbol L_X^A$\;

Compute 
\[
x_B \leftarrow \gamma_1^{\mathsf T}(x_A^{(i)})^{\mathsf T}
\]\;

Verify that $H_X^A z_A=0$ and $H_Z^B x_B=0$\;

Set 
\[
\Gamma_X(\gamma_1)e_i \leftarrow \boldsymbol L_Z^B x_B.
\]

}
\Return{$(\Gamma_Z, \Gamma_X)$}
\end{algorithm}

\begin{algorithm}[H]
\caption{Restriction of a degree-1 family by a target logical-$Z$ action}
\label{alg:restrict-by-gammaxz}

\KwIn{
$H_X^A,H_Z^A,H_X^B,H_Z^B$,
$\boldsymbol L_X^A, \boldsymbol L_Z^A, \boldsymbol L_X^B, \boldsymbol L_Z^B$, 
$\Gamma_Z^{\mathrm{tgt}}$, and an affine family
\[
\mathcal{A} =
\left\{
\gamma_1=\gamma_1^{(0)}+\sum_{i=1}^q x_i\Delta_i
\;\middle|\;
x_i\in\mathbb F_2
\right\} \subseteq \HomI{B}{A}.
\]
}
\KwOut{Either $\varnothing$, or a restricted affine subfamily $\mathcal{A}'$ whose elements realise $\Gamma_Z^{\mathrm{tgt}}$.}

Compute
\[
\Gamma_Z^{(0)} := \Gamma_Z(\gamma_1^{(0)}).
\]

\For{$i=1,\dots,q$}{
Compute
\[
\Gamma_Z^{(i)} := \Gamma_Z(\Delta_i).
\]
Set
\[
\lambda_i := \operatorname{vec}(\Gamma_Z^{(i)}).
\]
}

Set
\[
b := \operatorname{vec}\!\left(
\Gamma_Z^{\mathrm{tgt}}+\Gamma_Z^{(0)}
\right).
\]

Set
\[
\Lambda :=
\begin{pmatrix}
\lambda_1 & \lambda_2 & \cdots & \lambda_q
\end{pmatrix}.
\]

Solve
\[
\Lambda x=b
\]
over $\mathbb F_2$\;

\If{this system is infeasible}{
\Return $\varnothing$\;
}

Choose a solution $\bar{x}\in\mathbb F_2^q$, and let $\{\eta_1,\dots,\eta_m\}$ be a basis of $\ker \Lambda$\;

Set
\[
\gamma_1^\star := \gamma_1^{(0)}+\sum_{i=1}^q \bar{x}_i\Delta_i.
\]

\For{$j=1,\dots,m$}{
Set
\[
\widetilde{\Delta}_j := \sum_{i=1}^q(\eta_j)_i\Delta_i.
\]
}

Set
\[
\mathcal{A}' \leftarrow
\left\{
\gamma_1=\gamma_1^\star+\sum_{j=1}^m y_j\widetilde{\Delta}_j
\;\middle|\;
y_j\in\mathbb F_2
\right\}.
\]

In symplectically normalised logical bases, $\Gamma_X(\gamma_1)=(\Gamma_Z^{\mathrm{tgt}})^{\mathsf T}$ for every $\gamma_1\in\mathcal A'$ by Proposition~\ref{prop:Gamma-compat}\;

\Return $\mathcal A'$\;
\end{algorithm}

\begin{algorithm}[H]
\caption{Sparse basis replacement for a restricted family}
\label{alg:sparse-basis}

\KwIn{
An affine family $\mathcal A$,
\[
\mathcal{A} =
\left\{
\gamma_1=\gamma_1^{(0)}+\sum_{i=1}^q x_i\Delta_i
\;\middle|\;
x_i\in\mathbb F_2
\right\} \subseteq \HomI{B}{A},
\]
with linear part
\[
V = \operatorname{span}\{\Delta_1,\dots,\Delta_q\},
\]
and parity-check matrices
\[
H_X^B \in \mathbb F_2^{r_X^B\times n_B},
\qquad
H_Z^A \in \mathbb F_2^{r_Z^A\times n_A}
\]
}
\KwOut{
Either the original family $\mathcal A$, or an equivalent reparametrisation $\mathcal S$ with sparse generators $\widetilde{\Delta}_j$
}

\[
\mathcal K \leftarrow \varnothing
\]

\For{$i=1,\dots,n_A$}{
    \For{$p=1,\dots,r_X^B$}{
        Let $h_p$ be the $p^{\mathrm{th}}$ row of $H_X^B$\;
        \[
        \mathcal K \leftarrow \mathcal K \cup \{\, e_i h_p \,\}
        \]
        where $e_i\in\mathbb F_2^{n_A}$ is a standard basis column vector.
    }
}

\For{$j=1,\dots,n_B$}{
    \For{$s=1,\dots,r_Z^A$}{
        Let $z_s$ be the $s^{\mathrm{th}}$ row of $H_Z^A$\;
        \[
        \mathcal K \leftarrow \mathcal K \cup \{\, z_s^{\mathsf T} e_j^{\mathsf T} \,\}
        \]
        where $e_j\in\mathbb F_2^{n_B}$ is a standard basis column vector.
    }
}

\[
\mathcal K' \leftarrow
\left\{
K\in\mathcal K
\;\middle|\;
\operatorname{vec}(K)\in
\operatorname{span}\bigl(
\operatorname{vec}(\Delta_1),\dots,\operatorname{vec}(\Delta_q)
\bigr)
\right\}
\]

Extract a linearly independent subset
\[
\{\widetilde{\Delta}_1,\dots,\widetilde{\Delta}_{q'}\}\subseteq \mathcal K'
\]
with
\[
\operatorname{span}\{\widetilde{\Delta}_1,\dots,\widetilde{\Delta}_{q'}\}
=
\operatorname{span}(\mathcal K')
\]

\eIf{$\operatorname{rank}\!\bigl[\operatorname{vec}(\widetilde{\Delta}_1),\dots,\operatorname{vec}(\widetilde{\Delta}_{q'})\bigr]
=
\operatorname{rank}\!\bigl[\operatorname{vec}(\Delta_1),\dots,\operatorname{vec}(\Delta_q)\bigr]$}{
    \[
    \mathcal S =
    \left\{
    \gamma_1=\gamma_1^{(0)}+\sum_{j=1}^{q'} y_j\widetilde{\Delta}_j
    \;\middle|\;
    y_j\in\mathbb F_2
    \right\}
    \]
    \Return{$\mathcal S$}
}{
    \Return{$\mathcal A$}
}
\end{algorithm}

\section{Example circuits}

Examples of small physical circuits mentioned in Table~\ref{tab:circ_table} to illustrate the solutions found using our method. In Fig.~\ref{fig:steane-to-surface-3} we show a depth-2 distance-preserving circuit from the Steane code to the distance-$3$ rotated surface code, realising a logical (homomorphic) $\overline{\mathrm{CNOT}}_{A_0\to B_0}$ between the logical qubits of the codes. Fig.~\ref{fig:15_1_3_to_surface_3} shows the same logical operation between the $[[15, 1, 3]]$ quantum Reed-Muller code and distance-$3$ surface code, also distance-preserving with depth 2. To interpret the circuits we show below the check matrices used to define the stabilizer generators of the three codes mentioned above. 
\begin{equation}
\scriptsize
\begin{gathered}
H_X^{\mathrm{Steane}}=H_Z^{\mathrm{Steane}}=
\begin{pmatrix}
1&1&1&1&0&0&0\\
0&1&1&0&1&1&0\\
1&1&0&0&1&0&1
\end{pmatrix},
\\[1em]
H_X^{\mathrm{surf}}=
\begin{pmatrix}
0&1&1&0&0&0&0&0&0\\
1&1&0&1&1&0&0&0&0\\
0&0&0&0&1&1&0&1&1\\
0&0&0&0&0&0&1&1&0
\end{pmatrix},
\qquad
H_Z^{\mathrm{surf}}=
\begin{pmatrix}
1&0&0&1&0&0&0&0&0\\
0&1&1&0&1&1&0&0&0\\
0&0&0&1&1&0&1&1&0\\
0&0&0&0&0&1&0&0&1
\end{pmatrix},
\\[1em]
H_X^{\mathrm{15RM}}=
\begin{pmatrix}
1&0&0&0&1&1&1&0&0&0&1&1&1&0&1\\
0&1&0&0&1&0&0&1&1&0&1&1&0&1&1\\
0&0&1&0&0&1&0&1&0&1&1&0&1&1&1\\
0&0&0&1&0&0&1&0&1&1&0&1&1&1&1
\end{pmatrix},
\\[1em]
H_Z^{\mathrm{15RM}}=
\begin{pmatrix}
1&0&0&0&1&1&1&0&0&0&1&1&1&0&1\\
0&1&0&0&1&0&0&1&1&0&1&1&0&1&1\\
0&0&1&0&0&1&0&1&0&1&1&0&1&1&1\\
0&0&0&1&0&0&1&0&1&1&0&1&1&1&1\\
0&0&0&0&1&0&0&0&0&0&1&1&0&0&1\\
0&0&0&0&0&1&0&0&0&0&1&0&1&0&1\\
0&0&0&0&0&0&1&0&0&0&0&1&1&0&1\\
0&0&0&0&0&0&0&1&0&0&1&0&0&1&1\\
0&0&0&0&0&0&0&0&1&0&0&1&0&1&1\\
0&0&0&0&0&0&0&0&0&1&0&0&1&1&1
\end{pmatrix}.
\end{gathered}
\end{equation}

\begin{figure}[h]
    \centering
    \includegraphics[width=0.6\linewidth]{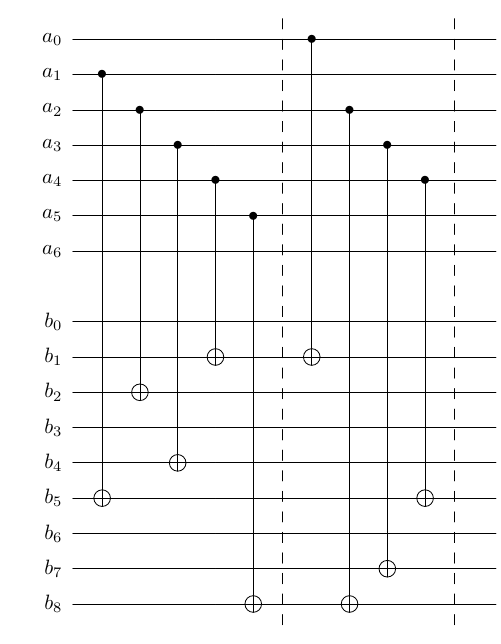}
    \caption{\small \textbf{Example physical circuit implementing a homomorphic CNOT via a chain  map.} Distance-preserving depth-2 homomorphic CNOT circuit from the Steane code to the distance-$3$ rotated surface code. Wires labeled $a_i$ and $b_i$ represent the physical qubits of these two codes, respectively. The dashed lines separate the time steps.}
    \label{fig:steane-to-surface-3}
\end{figure}

\begin{figure}[h]
    \centering
    \includegraphics[width=0.6\linewidth]{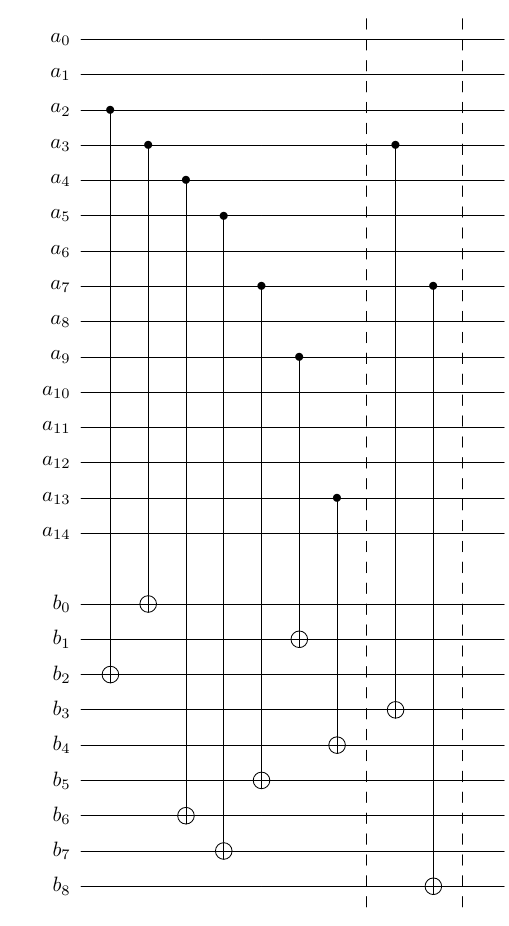}
    \caption{\small  \textbf{Example physical circuit implementing a homomorphic CNOT via a chain  map.} Distance-preserving depth-2 homomorphic CNOT circuit from the $[[15,1,3]]$ quantum Reed-Muller code to the distance-$3$ rotated surface code. Wires labeled $a_i$ and $b_i$ represent the physical qubits of these two codes, respectively. The dashed lines separate the time steps.}
    \label{fig:15_1_3_to_surface_3}
\end{figure}



\end{document}